\definecolor{Red}{rgb}{1,0,0}
\definecolor{Blue}{rgb}{0,0,1}
\definecolor{Olive}{rgb}{0.41,0.55,0.13}
\definecolor{Green}{rgb}{0,1,0}
\definecolor{MGreen}{rgb}{0,0.8,0}
\definecolor{DGreen}{rgb}{0,0.55,0}
\definecolor{Yellow}{rgb}{1,1,0}
\definecolor{Cyan}{rgb}{0,1,1}
\definecolor{Magenta}{rgb}{1,0,1}
\definecolor{Orange}{rgb}{1,.5,0}
\definecolor{Violet}{rgb}{.5,0,.5}
\definecolor{Purple}{rgb}{.75,0,.25}
\definecolor{Brown}{rgb}{.75,.5,.25}
\definecolor{Grey}{rgb}{.5,.5,.5}
\definecolor{Black}{rgb}{0,0,0}
\newcommand{\acal}{\mathcal{A}}
\newcommand{\rcal}{\mathcal{R}}
\newcommand{\tcal}{\mathcal{T}}
\newcommand{\eps}{\varepsilon}
\newcommand{\ind}{\mathbbm{1}}
\newcommand{\bdm}{\begin{displaymath}}
\newcommand{\edm}{\end{displaymath}}
\newcommand{\bea}{\begin{eqnarray*}}
\newcommand{\eea}{\end{eqnarray*}}
\newcommand{\bean}{\begin{eqnarray}}
\newcommand{\eean}{\end{eqnarray}}
\newcommand{\prob}{\mathbb{P}}
\newcommand{\expec}{\mathbb{E}}
\newcommand{\poly}{\mathrm{poly}}
\newtheorem{theorem}{Theorem}
\newtheorem{proposition}{Proposition}
\newtheorem{corollary}{Corollary}
\newtheorem{definition}{Definition}
\newtheorem{lemma}{Lemma}
\newenvironment{proof}{\noindent{\textbf{Proof:}}}{$\blacksquare$\vskip\belowdisplayskip}
\newcommand{\qed}{$\blacksquare$\vskip\belowdisplayskip}
\newcommand{\itemname}[1]{$\mathrm{[#1]}$}
\newcommand{\weight}{\lambda}
\newcommand{\path}{\mathrm{P}}
\newcommand{\distd}{\hat d}
\newcommand{\dist}{d}
\newcommand{\pathdepth}{\Delta_{\mathrm{c}}}
\newcommand{\vertexdepth}{\Delta_{\mathrm{v}}}
\newcommand{\vpath}{\widetilde{\mathrm{P}}}
\newcommand{\cluster}[1]{\widehat{H}_{#1}}
\newcommand{\vcluster}[1]{\widehat{V}_{#1}}
\newcommand{\ecluster}[1]{\widehat{E}_{#1}}
\newcommand{\clusters}[2]{\hat{h}_{#1}^{(#2)}}
\newcommand{\vclusters}[2]{\hat{v}_{#1}^{(#2)}}
\newcommand{\eclusters}[2]{\hat{e}_{#1}^{(#2)}}
\newcommand{\ball}[2]{\widehat{B}}
\newcommand{\bipart}{b}
\newcommand{\esttree}[1]{\widehat{T}^{(#1)}}
\newcommand{\intersect}{\widehat{\Phi}}
\newcommand{\minicontractor}{\textsc{Mini Contractor}}
\newcommand{\extender}{\textsc{Extender}}
\newcommand{\clust}[1]{{H}_{#1}}
\newcommand{\vclust}[1]{{V}_{#1}}
\newcommand{\eclust}[1]{{E}_{#1}}
\newcommand{\truetree}[1]{{T}^{(#1)}}
\newcommand{\trueleaves}[1]{{L}^{(#1)}}
\newcommand{\trueintersect}{{\Phi}}
\newcommand{\trueedges}[1]{{E}^{(#1)}}
\newcommand{\truevertices}[1]{{V}^{(#1)}}
\newcommand{\minipart}[3]{\psi_{#1}(#2,#3)}
\newcommand{\fullpart}[3]{\bar\psi_{#1}(#2,#3)}
\newcommand{\minipartone}[3]{\psi^{(#2)}_{#1}(#2,#3)}
\newcommand{\miniparttwo}[3]{\psi^{(#3)}_{#1}(#2,#3)}
\newcommand{\fullpartone}[3]{\bar\psi^{(#2)}_{#1}(#2,#3)}
\newcommand{\fullparttwo}[3]{\bar\psi^{(#3)}_{#1}(#2,#3)}
\newcommand{\partsize}[2]{r(#1,#2)}
\newcommand{\radm}{m}
\newcommand{\bprime}{\beta'}
\begin{document}

\title{Phylogenies without Branch Bounds: 
Contracting the Short, Pruning the Deep}
%\titlerunning{Phylogenies without Branch Bounds}
%\subtitle{Extended Abstract}
\author{
Constantinos Daskalakis
\and
Elchanan Mossel\thanks{
E.M. is supported by an Alfred Sloan fellowship in
Mathematics and by NSF grants DMS-0528488, and DMS-0548249 (CAREER) and by 
ONR grant N0014-07-1-05-06.
}
\and
Sebastien Roch
}
%\institute{Microsoft Research \and UC Berkeley and Weizman Institute}
\maketitle
%\addtocounter{page}{-1}
\thispagestyle{empty}
\begin{abstract}
We introduce a new phylogenetic reconstruction algorithm
which, unlike most previous rigorous inference techniques, does not rely on assumptions regarding
the branch lengths or the depth of the tree. The
algorithm returns a forest which
is guaranteed to contain all edges that are: 1) sufficiently
long and 2) sufficiently close to the leaves. 
How much of the true tree is recovered depends on the
sequence length provided.
The algorithm is distance-based and runs in polynomial time.
\end{abstract}

%\clearpage

\section{Introduction}\label{section:introduction}

In Evolutionary Biology, the speciation history of a family of related
organisms is generally represented graphically by a {\em phylogeny},
that is, a tree where the leaves are the observed (extant) species and
the branchings indicate speciation events.
Traditional approaches  
for reconstructing phylogenies from homologous molecular sequences
extracted from the observed species~\cite{Felsenstein:04,SempleSteel:03}
are typically 
computationally intractable~\cite{GrahamFoulds:82,DaySankoff:86,Day:87,ChorTuller:06,Roch:06}, 
statistically inconsistent~\cite{Felsenstein:78}, or they require 
impractical sequence lengths \cite{Atteson:99,LaceyChang:06,SteelSzekely:99,SteelSzekely:02}.
Nevertheless, over the past decade, 
much progress has been made in the design of efficient, fast-converging reconstruction techniques, 
starting with the seminal
work of Erd\"os et al.~\cite{ErStSzWa:99a}. The algorithm in~\cite{ErStSzWa:99a}, often dubbed 
the Short Quartet Method (SQM), is based on well-known distance-matrix techniques, that is,
it relies on estimates of the evolutionary
distance between each pair of species (roughly the time elapsed since their
most recent common ancestor).
However, 
unlike other popular distance methods such as Neighbor-Joining~\cite{SaitouNei:87}, 
the key behind SQM's performance is 
that it discards long evolutionary distances, 
whose estimates from sequence comparisons are known to be statistically unreliable. 
The algorithm works by first building subtrees 
of small diameter and, in a second stage, glueing the pieces back together. 

The Short Quartet Method is in fact guaranteed to return the correct topology from polynomial-length sequences
in polynomial time with high probability. 
But this appealing theoretical performance comes at a price.
The results of~\cite{ErStSzWa:99a} rely critically on biological assumptions which, although reasonable, 
are often not met in practice (see Section~\ref{sec:related} for a formal statement):
\begin{itemize}
\item[a)] {\it [Dense Sampling of Species]}
The observed species are ``closely related.'' In particular, there are no exceptionally long
branches in the phylogeny. 

\item[b)] {\it [Absence of Polytomies]} 
The phylogeny is bifurcating. In fact, Erd\"os et al.~assume that speciation 
events are sufficiently far apart to be easily distinguished.

\end{itemize} 
The point of a) is that it implies a natural bound on the depth of the tree which in turn ensures 
that enough information about the deep parts of the tree diffuses to the leaves. 
As for Assumption b), it guarantees that a clear signal 
can be extracted from each branch of the phylogeny.
It is obvious---at least intuitively---that assumptions such as a) and b) are
necessary to secure the type of results Erd\"os et al.~obtain: 
\emph{the guaranteed reconstruction of the full phylogeny}. 
Hence, to improve over SQM and obtain strong guarantees under more general
conditions, one has to relax this last requirement.

In this paper, we design an algorithm which provides strong reconstruction guarantees without Assumptions a) and b). 
We show that our algorithm is guaranteed to recover a {\em forest} containing all edges that are ``sufficiently long'' and ``sufficiently close'' to the leaves.  
In fact, we allow a
trade-off between the resolution of short branches and the depth of the reconstructed forest, a feature of potential practical interest. 
Also, we guarantee that our reconstructed forest has the desirable property of being \emph{disjoint} 
(although the presence of short edges leads us to allow deep intersections of very short branches between the subtrees). 
Moreover, our algorithm does not require the knowledge of a priori bounds on branch lengths or tree depth. 
Finally if Assumptions a) and b) are satisfied, we recover the whole phylogeny and provide an alternative to the algorithm of Erd\"os et al. 

Precise statements are given in Section~\ref{section:main}. For a full comparison to related work see also Section~\ref{sec:related}.
%For a lack of space, the proofs of our results are not included in this extended abstract.
%But they can be found in the full version of the paper~\cite{DaMoRo:08arxiv}.

\subsection{What can we hope to reconstruct?}\label{section:definitions}

%A phylogeny is a tree representing the speciation history of a group of organisms. The leaves of the tree represent the observed (current) species, and the internal nodes represent the ancestors. Each branching is a speciation event. The branch lengths can be thought of as the time elapsed between the corresponding nodes of the tree.

Well-known identifiability results~\cite{Chang:96} guarantee that phylogenies---or at least their idealized stochastic models---can be fully reconstructed
given enough data at the leaves. However, molecular data gathered from current species are in essence limited, which begs the question: 
{\em How much of the tree can we really hope to reconstruct?} We pointed out above two important sources of difficulties: short branches 
produce a weak signal that may be hard to detect; similarly, untangling the deep parts of the tree presents challenges 
that are well documented (see, e.g.,~\cite{PhilippeLaurent:98,CDvMCSB:06}).
Note that these issues are fundamentally ``information-theoretic'' and affect all reconstruction methods.

%A basic task in systematic biology is to reconstruct phylogenies from molecular data extracted from current species. In practice the data gathered by biologists is inherently limited, hence in general we %cannot hope to reconstruct exactly the full history. In particular, as discussed above, very short branches produce very low signal that may be hard to detect from the data. For similar reasons, untangling %the deep parts of the tree of life presents well-documented challenges (see, e.g.,~\cite{PhilippeLaurent:98,Doolittle99,CDvMCSB:06}). This begs the question: {\em How much of tree can we hope to %reconstruct?} 

To avoid these difficulties, most {\em rigorous} methods impose restrictions on the length of the branches and/or the depth of the tree, 
which may be unsatisfactory from a practical perspective. 
On the other hand, methods commonly used in {\em practice}, such as likelihood and bayesian methods, 
typically produce several candidate trees as well as confidence estimates. 
But theoretical guarantees on the quality of such outputs are hard to obtain. 
%In particular, for small amounts of data, the confidence estimates are themselves unreliable.

Here, we seek to give strong reconstruction guarantees without any assumption on the true phylogeny. 
Our goal is to recover, for any given amount of data, as much of the tree as can rigorously be reconstructed with high confidence. 
Since the full phylogeny may not always be recoverable, 
we are led to a more flexible solution concept: 
we output a {\em contracted subforest} of the true phylogeny. 
That is, we output a forest containing all branches that are ``sufficiently long'' and ``sufficiently recent''; 
note that ``sufficiently'' here is determined (information-theoretically) 
by the size of the data (usually in terms of sequence length). 
In the remainder of this section we formalize this solution concept.

\paragraph{The input.} Formally, a phylogeny is a \emph{weighted, multifurcating tree} on a set of leaves $L$, which we identify with the labels $[n] = \{1,\ldots,n\}$. 
We denote a phylogeny by $T = (V,E;L,\weight)$. Here $V$ and $E$ are respectively the vertex and edge set of the tree, 
and $\weight : E \to (0,+\infty)$ assigns a weight to each edge (the branch length). 
We assume that all internal vertices $V-L$ have degree at least $3$.

A phylogeny is naturally equipped with a so-called additive metric on the leaves
$\dist : L\times L \to (0,+\infty)$ defined as follows
\begin{equation*}
\forall u,v \in L,\ \dist(u,v) = \sum_{e\in\path_T(u,v)} \weight_e,
\end{equation*}
where $\path_T(u,v)$ is the set of edges on the path between
$u$ and $v$ in $T$. Often $\dist(u,v)$ is referred to as the ``evolutionary distance'' between species $u$ and $v$. Since under the assumptions above there is a one-to-one
correspondence between $\dist$ and $\weight$, 
we write either $T = (V,E;L,\dist)$ or $T = (V,E;L,\weight)$.
We also sometimes use the natural extension of $\dist$ to the internal vertices
of $T$. We denote by $\tcal$ the set of all phylogenies on any number
of leaves.

It is well-known that given an additive metric $\dist$ one can reconstruct the corresponding phylogeny $T$. 
However, in practice, one can only derive an {\em estimate} $\distd$ of $\dist$, 
the accuracy of which depends on the amount of data used. 
(This estimate is known in the literature as the ``distance matrix''.) 
Our goal in this paper is to reconstruct a phylogeny---or as much of it as possible---from 
this ``distorted'' version of its additive metric. A typical property of distance estimates 
is that estimates of long distances are unreliable. 
The following definition formalizes this phenomenon.
See Figure~\ref{fig:distorted} for an illustration.
\begin{figure}
\centering
\includegraphics[width=9cm]{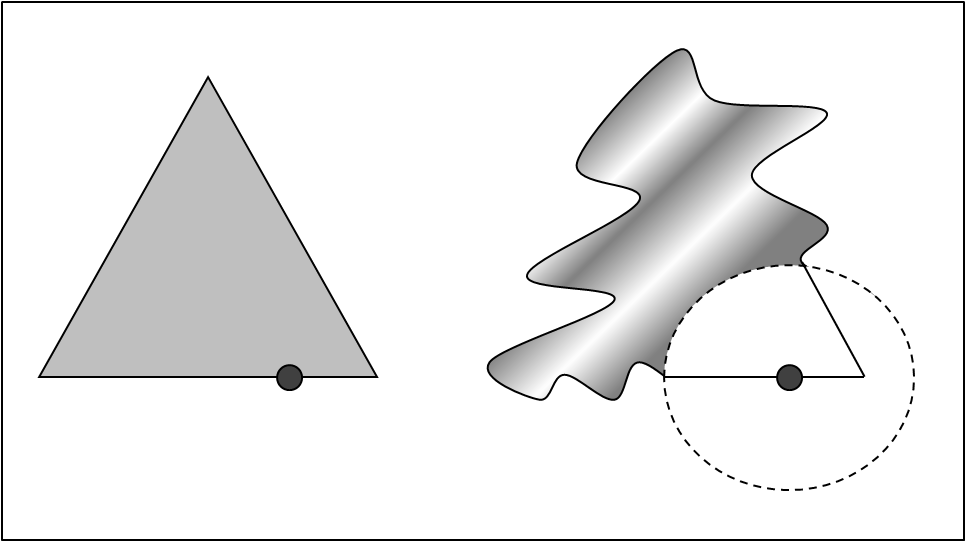}
\caption{The effect of distance distortion from the perspective of a leaf. On the left hand side is the true phylogeny.
On the right hand side, only distances within a certain radius represent accurately the metric underlying the phylogeny.}
\label{fig:distorted}
\end{figure}
\begin{definition}[Distorted Metric~\cite{Mossel:07,KiZhZh:03}]\label{def:distorted metric}
Let $T = (V,E;L,\dist)$ be a phylogeny and let $\tau, M > 0$. 
We say that $\distd : L\times L \to (0,+\infty]$ is a $(\tau, M)$-\emph{distorted metric}
for $T$ or a $(\tau, M)$-\emph{distortion} of $\dist$ if:
\begin{enumerate}
\item \itemname{Symmetry}
For all $u,v \in L$, $\distd$ is symmetric, that is,
\begin{equation*}
\distd(u,v) = \distd(v,u);
\end{equation*}
\item \itemname{Distortion} $\distd$ is accurate on
``short'' distances, that is, for all $u,v \in L$, if either 
$\dist(u,v) < M + \tau$ or $\distd(u,v) < M + \tau$ then
\begin{equation*}
\left|\dist(u,v) - \distd(u,v)\right| < \tau.
\end{equation*}
\end{enumerate}
\end{definition}
In phylogenetic reconstruction, a distorted metric is naturally derived from samples of a Markov model on a tree---a common model of DNA sequence evolution used in Biology. 
(See Appendix~\ref{section:markov} for details.) 
In the remainder of this paper, we assume that we are given a $(\tau,M)$-distortion $\distd$ of an additive metric $\dist$ 
and we seek to recover the underlying phylogeny $T$.

\paragraph{Contraction and pruning.} %As discussed in~\cite{Mossel:07}, 
Given only a $(\tau,M)$-distorted metric, it is clear that the best we can hope for in general is to reconstruct a forest containing those edges of $T$ that are ``sufficiently close'' to the leaves. 
Indeed, note that two phylogenies that are identical up to depth $M$ from the leaves, but are otherwise different, can give rise to the same distorted metric. 
Moreover, since we do not assume that edges are longer than the accuracy $\tau$, some edges may be too short to be reconstructed 
and, as we mentioned before, we allow ourselves to instead contract them. 
Hence, we are led to consider subforests of the true phylogeny 
where deep edges are {\em pruned} and short edges are {\em contracted}.

To formalize this idea we need a few definitions. Let us first describe what we mean by a {\em subforest} of a phylogeny $T = (V,E;L,\dist)$. 
Given a set of vertices $V' \subseteq V$, the {\em subtree of $T$ restricted to $V'$} is the tree obtained 1) by keeping only nodes and edges on paths between vertices in $V'$ and then 
2) by contracting all paths composed of vertices of degree 2, except the nodes in $V'$. 
See Figure~\ref{fig:restricted} for an example. We denote this tree by $T|_{V'}$. 
We typically take $V' \subseteq L$. A {\em subforest} of $T$ is defined to be a collection of restricted subtrees of $T$.
\begin{figure}
\begin{center}
\input{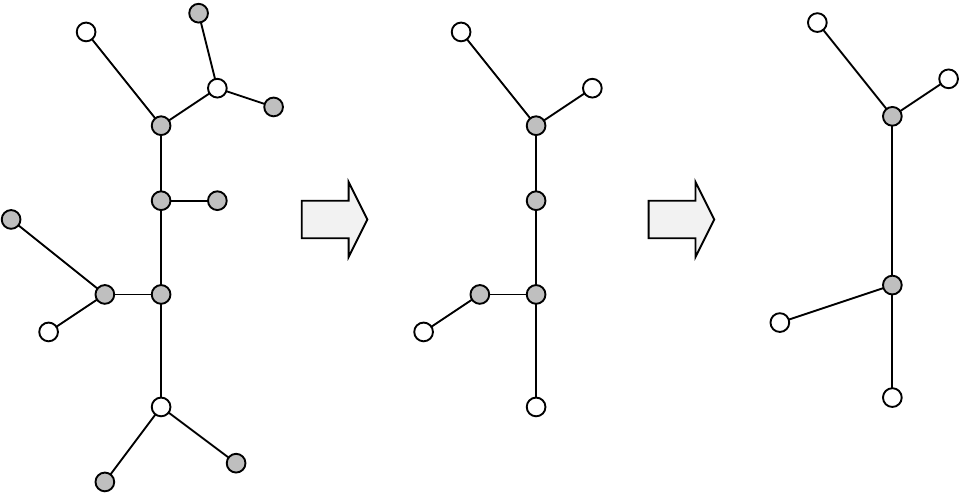_t}\caption{Restricting the top tree to its white nodes.}\label{fig:restricted}
\end{center}
\end{figure}

We also need a notion of depth. 
Given an edge $e \in E$, the \emph{chord depth} of $e$ 
is the length of the shortest path among all paths crossing $e$ between two leaves. 
That is,
\begin{equation*}
\pathdepth(e) = \min\left\{\dist(u,v)\ :\ u,v \in L, e\in \path_T(u,v)\right\}. 
\end{equation*}
We define the \emph{chord depth of a tree $T$} to be the maximum
chord depth in $T$
\begin{equation*}
\pathdepth(T) = \max\left\{\pathdepth(e)\ :\ e\in E\right\}.
\end{equation*}

\begin{definition}[Contracted Subforest]
Let $T = (V,E;L,\dist)$ be a phylogeny. Fix $M > 0$. Let $\{L_1, \ldots, L_q\}$ be the natural partition of the leaf set $L$ obtained by removing all edges $e\in E$ such that $\pathdepth(e) \geq M$. We define the $M$-\emph{pruned subforest} of $T$ to be the forest $F_M(T) = (V_M, E_M)$ consisting of the trees $\{T|_{L_1},\ldots,T|_{L_q} \}$. The metric $\dist$ is extended as follows for all $u,v \in L$,
\begin{displaymath}
\dist_M(u,v) =
\left\{ 
\begin{array}{ll}
\dist(u,v), & \mathrm{if\ }u,v\mathrm{\ are\ in\ the\ same\ subtree\ of\ }F_M(T),\\
+\infty, & \mathrm{o.w.}
\end{array}
\right.
\end{displaymath}
We also denote by $\weight_M$ the edge lengths of $F_M(T)$.

Now, given also $\tau > 0$, the $\tau$-\emph{contracted} $M$-\emph{pruned subforest}
of $T$ is the forest $F_{\tau, M}(T) = (V_{\tau,M}, E_{\tau,M})$ obtained from $F_M(T)$ 
by contracting edges $e\in E_M$ of weight $\weight_M(e) \leq \tau$.
\end{definition}

\paragraph{Path-disjointness.} We require that
the trees of our reconstructed forest are ``non-intersecting''. 
This is a natural condition to impose in order to obtain a
meaningful reconstruction: we want to avoid as much as possible that the same
branches appear in many subtrees.
In fact, we can only guarantee
approximate disjointness as defined below.

We first need a notion of depth for vertices. For a phylogeny $T = (V,E;L,\dist)$ and a vertex $x \in V$, 
the \emph{vertex depth} of $x$ is the length of the shortest path between $x$ and the set of leaves. That is,
\begin{equation*}
\vertexdepth(x) = \min\left\{\dist(u,x)\ :\ u \in L\right\}. 
\end{equation*}
Given two leaves $u,v$ of $T$, we denote by $\vpath_{T}(u,v)$
the set of vertices on the path between $u$ and $v$ in $T$.

We say that two trees are $(\tau, M)$-path disjoint if they are ``almost disjoint'' 
in the sense that they only share edges (if any) that are ``deep'' (endpoints have vertex depth at least $M/2$) 
and ``short'' (length at most $\tau$). More formally:
\begin{definition}[Approximate Path-Disjointness] \label{def:paht disjointness}
Let $T = (V,E;L,\dist)$ be a phylogeny. 
Two subtrees $T_1, T_2$ of $T$ restricted respectively to
$L_1, L_2 \subseteq L$ are \emph{$(\tau,M)$-path-disjoint} 
if $L_1 \cap L_2 = \emptyset$ and for all pairs of leaves $u_1,v_1 \in L_1$ and $u_2,v_2 \in L_2$ such that
\begin{equation*}
\vpath_T(u_1,v_1)\cap \vpath_T(u_2,v_2) \neq \emptyset,
\end{equation*}
we have:
\begin{equation*}
\min\{\vertexdepth(x)\ :\ x \in \vpath_T(u_1,v_1)\cap \vpath_T(u_2,v_2)\} \geq \frac{1}{2}M,
\end{equation*}
and, if further $\path_T(u_1,v_1)\cap \path_T(u_2,v_2) \neq \emptyset$,
\begin{equation*}
\max\{\weight_e\ :\ e \in \path_T(u_1,v_1)\cap \path_T(u_2,v_2)\} \leq \tau.
\end{equation*}
More generally, a collection of restricted subtrees $T_1,\ldots,T_q$ of $T$ are \emph{$(\tau,M)$-path-disjoint}
if they are pairwise $(\tau,M)$-path-disjoint. In the case $\tau = 0$, we simply say that
the subtrees are \emph{path-disjoint}.
\end{definition}

\subsection{Main result and corollaries}\label{section:main}

\paragraph{Main result.} Our main result is an algorithm which, given a $(\tau,M)$-distorted metric, 
reconstructs a contracted subforest (of the true phylogeny) whose trees are approximately path-disjoint.
Typically, $M$ is much larger than $\tau$.
In that case, we reconstruct a subforest of $T$ with chord depth $\approx \frac{1}{2} M$ which includes
all edges of length at least $4\tau$. The reconstructed subtrees may ``overlap'' on edges of length at most $2\tau$ at
vertex depth $\gtrapprox \frac{1}{4} M$. In Section~\ref{section:lower}, we show that these
parameters are essentially optimal. The algorithm runs in polynomial time. 

More precisely, we show:
\begin{theorem}[Main Result]\label{theorem:main}
Let $\tau$ and $M$ be monotone functions of $n$ with $M > 3\tau$. 
Let $\radm > 3\tau$ be such that
\begin{equation*}
\radm < \frac{1}{2}[M - 3\tau],
\end{equation*}
for all $n$.
Then, 
there is a polynomial-time algorithm $\acal$ 
such that, for all phylogenies $T = (V,E;L,\dist)$ in $\tcal$ with $|L| = n$
and all $(\tau,M)$-distortions $\distd$ of $\dist$,
$\acal$ applied to $\distd$ satisfies the following:
\begin{enumerate}
\item \itemname{Approximate\ Path\ Disjointness}\label{item:main1} 
$\acal$ returns a $(2\tau, \radm - 3\tau)$-path-disjoint subforest $\widehat F$ of $T$;
\item \itemname{Depth\ Guarantee}\label{item:main2} 
The forest $\widehat F$ is a refinement of $F_{4 \tau, \radm - \tau}(T)$;
\end{enumerate}
\end{theorem}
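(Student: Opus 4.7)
The plan is to build the forest $\widehat F$ by first constructing, for every leaf $\ell$, a local ``cluster'' of leaves within reliable distance of $\ell$, then using only trustworthy short distances to reconstruct topology inside each cluster, and finally stitching clusters together when they agree. Concretely, for each leaf $\ell$ I define the ball $\widehat B_\ell = \{u \in L : \distd(\ell,u) \leq \radm\}$; by the distortion hypothesis and the bound $\radm < \tfrac12(M-3\tau)$, any pair of leaves $u,v \in \widehat B_\ell$ satisfies $\dist(u,v) \leq 2\radm + \tau < M+\tau$, so the distorted distances among such pairs lie within $\tau$ of the true distances, and in particular any four-point (Buneman) test on a quadruple of leaves inside a ball resolves the induced quartet topology correctly whenever the middle edge of that quartet has total weight $> 4\tau$. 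The subtree returned for $\ell$ is then the topology on $\widehat B_\ell$ recovered by a standard distance-based routine (four-point method / SQM on short quartets), with any edges of estimated weight $\leq 2\tau$ contracted; the output $\widehat F$ is a maximal collection of such subtrees after the obvious merging of identical overlaps.

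For the refinement claim \itemname{Depth\ Guarantee}, I would fix an edge $e \in E$ with $\weight_e > 4\tau$ and $\pathdepth(e) \leq \radm - \tau$. By definition of chord depth there exist leaves $u,v \in L$ with $e \in \path_T(u,v)$ and $\dist(u,v) \leq \radm - \tau$, so $\distd(u,v) \leq \radm$ and therefore $u,v \in \widehat B_u$. Completing the quartet by choosing $a,b$ on opposite sides of $e$ close to the endpoints of $e$ (their existence is forced because $\pathdepth(e) \leq \radm - \tau$ applied from either side of $e$), all six pairwise true distances in $\{u,v,a,b\}$ are at most $\radm - \tau + \tau \leq \radm < M+\tau$. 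Distortion then yields a correctly resolved quartet through $e$, so the subtree on $\widehat B_u$ contains $e$ with weight $> 4\tau > 2\tau$, i.e., $e$ is not contracted. This forces $\widehat F$ to refine $F_{4\tau,\radm-\tau}(T)$.

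For \itemname{Approximate\ Path\ Disjointness} I would take two reconstructed subtrees $\widehat T_1,\widehat T_2$ and leaves $u_1,v_1$ and $u_2,v_2$ whose paths in $T$ intersect. Each pair belongs to a common ball of radius $\radm$ in $\distd$, hence in the true metric the path between them has length at most $\radm + \tau$. Any vertex $x$ in $\vpath_T(u_1,v_1)\cap \vpath_T(u_2,v_2)$ lies on both paths, and splitting each path at $x$ forces one of the two halves on each side to have length at most $\tfrac12(\radm+\tau)$; the remaining halves bound $\vertexdepth(x)$ from below by roughly $\tfrac12(\radm - 3\tau)$ once the distortion correction is applied carefully, giving the required deep-intersection bound $\vertexdepth(x) \geq \tfrac12(\radm - 3\tau)$. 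For a shared edge $e \in \path_T(u_1,v_1)\cap \path_T(u_2,v_2)$, if $\weight_e > 2\tau$ the quartet test performed inside either ball would have detected $e$ as a non-contracted split separating the leaves that lie on its two sides, and the algorithm's merging rule would then have placed the two subtrees into the same component of $\widehat F$, a contradiction; hence every shared edge satisfies $\weight_e \leq 2\tau$.

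The main obstacle I anticipate is the consistency/merging step: short edges (length between $\tau$ and $2\tau$) may be resolved in one ball and contracted in another, so the subtrees reconstructed in overlapping balls will generally disagree on a few near-threshold splits. Handling this requires a careful merge/conflict-resolution rule whose specification is driven by exactly the $(2\tau,\radm - 3\tau)$ target: two subtrees are declared disjoint precisely when any shared structure is provably deep enough and short enough to tolerate, and are merged otherwise. Making this rule both well-defined (it does not create cycles or ambiguity) and tight enough that the bounds in \itemname{Approximate\ Path\ Disjointness} actually hold is the delicate part; everything else reduces to applying the distortion bound on quadruples whose pairwise true distances are bounded by $\radm + \tau$.
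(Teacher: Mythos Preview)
Your proposal captures several correct local ingredients (the distortion bound on short pairs, the four-point/quartet test for edges of length $>4\tau$, and the vertex-depth computation for shared nodes), but the global architecture diverges from the paper's in a way that leaves a genuine gap.

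The paper does \emph{not} build per-leaf balls of radius $\radm$ and then merge the resulting local trees. Instead it first forms the graph $\cluster{\radm}$ on $L$ with edges $\{u,v : \distd(u,v)<\radm\}$ and takes its connected components; each component $\vclusters{\radm}{i}$ is then reconstructed \emph{as a single tree} $\esttree{i}$. The key device that makes this work is the pair \minicontractor/\extender: for each edge $(u,v)$ of the clustering graph one looks at the much larger ball $\ball{}{}(u,v)=\{w:\distd(u,w)\lor\distd(v,w)<M\}$ (radius $M$, not $\radm$), finds the long edges on $\path(u,v)$ via intersection points, and then \emph{extends} each resulting bipartition to all of $\vclusters{\radm}{i}$ by a purely graph-theoretic BFS (Proposition~\ref{proposition:outside}). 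This completely sidesteps the ``merge overlapping subtrees'' step you flag as the main obstacle: there is no merging, because the leaf partition into components is fixed up front and every reconstructed split is immediately global on its component.

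This matters for both conclusions. For the Depth Guarantee, your witness argument needs $a,b$ on either side of $e$ to fall in the \emph{same} radius-$\radm$ ball, and that is not forced by $\pathdepth(e)\le \radm-\tau$ alone; the paper instead uses that every edge of the restricted tree $\truetree{i}$ has chord depth $<\radm+\tau$ (Proposition~\ref{proposition:chord}) together with the radius-$M$ ball to guarantee witnesses (Proposition~\ref{proposition:witness}). For Approximate Path Disjointness, your $\weight_e\le 2\tau$ argument is circular: you invoke ``the algorithm's merging rule would then have placed the two subtrees into the same component,'' but that rule is exactly what you have not specified. The paper's proof (Proposition~\ref{proposition:shared}) is a direct four-point computation: pick chord-depth witnesses $x_j,y_j$ in each component with $\distd(x_j,y_j)<\radm$, note that cross-component pairs satisfy $\distd\ge\radm$, and read off $2\weight_e<4\tau$ from the four-point formula. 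No reference to the reconstruction or to any merging is needed.

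In short: your local estimates are fine, but the decomposition you propose leaves the hardest step (consistent merging under near-threshold edges) unresolved, whereas the paper's component-first/extend-later design eliminates that step entirely.
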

We give below a few important special cases of Theorem~\ref{theorem:main}.
%The proof of Theorem~\ref{theorem:main} can be found in the full version
%of the paper~\cite{DaMoRo:08arxiv}.

\paragraph{Tree case.}
When the amount of data is sufficient to produce a distorted metric with $M = \Omega(\pathdepth(T))$, 
we get a single component, that is, the full tree (up to those edges that are contracted).
\begin{corollary}[Tree Case]\label{cor:tree case}
Let $\tau > 0$ and $M > 2\pathdepth(T) + 5\tau$. Then,
choosing $\radm > \pathdepth(T) + \tau$ guarantees that the reconstructed forest
is composed of only a tree.
\end{corollary}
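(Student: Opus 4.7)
The plan is to derive this corollary as an immediate specialization of Theorem~\ref{theorem:main}. The key observation is that under the hypothesis $\radm > \pathdepth(T) + \tau$, the pruning stage in the construction of $F_{4\tau, \radm-\tau}(T)$ removes no edges of $T$ at all, so the reference forest already consists of a single tree; the refinement guarantee supplied by the main theorem then forces the algorithm's output $\widehat F$ to be a single tree as well.

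First I would check that the hypotheses of Theorem~\ref{theorem:main} are satisfiable under the stated assumptions. The condition $M > 3\tau$ follows from $M > 2\pathdepth(T) + 5\tau \geq 5\tau$. The compatibility of a choice $\radm > \pathdepth(T) + \tau$ with the upper bound $\radm < \frac{1}{2}[M-3\tau]$ follows by rearranging $M > 2\pathdepth(T) + 5\tau$ to obtain $\frac{1}{2}[M-3\tau] > \pathdepth(T) + \tau$, so a non-empty window of admissible $\radm$ exists. The side condition $\radm > 3\tau$ can be accommodated inside the same window, if necessary by moving $\radm$ toward its upper endpoint.

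Next I would unpack $F_{4\tau,\radm-\tau}(T)$. By definition, $F_{\radm-\tau}(T)$ is obtained from $T$ by deleting every edge $e \in E$ with $\pathdepth(e) \geq \radm-\tau$. Because $\radm - \tau > \pathdepth(T) \geq \pathdepth(e)$ for every edge $e$, no edge is deleted, so $F_{\radm-\tau}(T) = T$ is a single connected tree on $L$. Contracting the short edges of weight at most $4\tau$ then yields $F_{4\tau,\radm-\tau}(T)$, which still spans $L$ as one tree. Finally I would invoke the Depth Guarantee of Theorem~\ref{theorem:main}: $\widehat F$ is a refinement of $F_{4\tau,\radm-\tau}(T)$, and in particular $\widehat F$ induces the same partition of $L$ into connected components. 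Since that partition is the trivial one $\{L\}$, the output is a single tree. I anticipate no genuine obstacle; the corollary is pure bookkeeping once one observes that the pruning threshold $\radm - \tau$ strictly exceeds the chord depth of $T$.
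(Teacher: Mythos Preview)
Your proposal is correct and is precisely the intended derivation: the paper states this corollary without proof as a direct consequence of Theorem~\ref{theorem:main}, and your argument---checking that $\radm$ can be chosen in the admissible window and then observing that $\radm-\tau>\pathdepth(T)$ makes the pruning vacuous so that $F_{4\tau,\radm-\tau}(T)$ is a single tree refined by $\widehat F$---is exactly the bookkeeping the authors have in mind.
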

%\begin{proof}
%Note that by Theorem~\ref{theorem:main} the chord depth of any shared edge $e$ would be
%\begin{equation*}
%\pathdepth(e) > \frac{1}{2}(\tilde\beta M - 3\tau) + 2\tau + \frac{1}{2}(\tilde\beta M - 3\tau)
%= \tilde\beta M -  \tau.
%\end{equation*}
%\end{proof}
In the case of ``dense'' phylogenies, $M = \Omega(\log n)$ is sufficient
to reconstruct the full tree.
\begin{definition}[Dense Phylogenies (see e.g.~\cite{ErStSzWa:99a})]
We say that a collection of phylogenies $\tcal'$ is \emph{dense} if there is
a $0 < g < +\infty$ (independent of $n$) such that for all $T = (V,E; L,\weight)\in \tcal'$ we have
\begin{equation}\label{eq:dense}
\forall e\in E,\ \weight_e \leq g.
\end{equation}
We denote by $\tcal_g$ the set of phylogenies satisfying (\ref{eq:dense}).
\end{definition}
\begin{corollary}[Dense Case]\label{cor:dense}
In the case of dense phylogenies, $M = \Omega(\log n)$ suffices to guarantee
the reconstruction of the full tree, up to contracted edges.
\end{corollary}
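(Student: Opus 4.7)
The plan is to reduce the statement to the Tree Case, Corollary~\ref{cor:tree case}, which says that as long as $M > 2\pathdepth(T) + 5\tau$ (with $\radm > \pathdepth(T) + \tau$ chosen accordingly), the reconstructed forest consists of a single tree. So the whole corollary boils down to showing that, for any phylogeny $T \in \tcal_g$ on $n$ leaves, one has $\pathdepth(T) = O(\log n)$, with the implicit constant depending only on $g$. Then $M = \Omega(\log n)$ automatically exceeds $2\pathdepth(T) + 5\tau$ (treating $\tau$ as fixed, or absorbing it into the constant), and Corollary~\ref{cor:tree case} delivers the full tree up to contracted edges.

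To bound $\pathdepth(T)$, fix any edge $e = (x,y) \in E$ and consider the two subtrees $T_x, T_y$ obtained from $T \setminus \{e\}$, with leaf sets $L_x, L_y$. I would prove by induction on the leaf count that from $x$ there is a path in $T_x$ to some leaf of $T_x$ using at most $\lceil \log_2 |L_x| \rceil$ edges (and similarly for $y$). The inductive step uses the degree hypothesis: every internal vertex of $T$ has degree at least $3$, so upon arriving at such a vertex in $T_x$ along one edge, there remain at least two subtrees to choose from; picking the one containing the fewer leaves at least halves the remaining leaf count. Combining the two halves and the edge $e$ itself, the path through $e$ between the two chosen leaves uses at most $2\lceil \log_2 n\rceil + 1$ edges, each of weight at most $g$, giving
\begin{equation*}
\pathdepth(e) \leq g\bigl(2\lceil \log_2 n\rceil + 1\bigr).
\end{equation*}
Since this holds for every edge, $\pathdepth(T) \leq g(2\lceil \log_2 n\rceil + 1) = O(\log n)$.

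With this bound in hand, pick any $M$ satisfying $M \geq 2g(2\lceil \log_2 n\rceil + 1) + 5\tau + 1$, which is $\Omega(\log n)$, and set $\radm = g(2\lceil \log_2 n\rceil + 1) + \tau + 1 > \pathdepth(T) + \tau$. The hypotheses of Corollary~\ref{cor:tree case} are satisfied, so the algorithm of Theorem~\ref{theorem:main} returns a single tree, which by the Depth Guarantee is a refinement of $F_{4\tau, \radm - \tau}(T)$; since no edge of $T$ has chord depth exceeding $\pathdepth(T) < \radm - \tau$, no edge is pruned, and only edges of weight at most $4\tau$ may be contracted.

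There is essentially no serious obstacle; the one step to be careful about is the logarithmic leaf-depth argument, where the degree-$\geq 3$ assumption on internal vertices is used exactly once per descent step to guarantee at least two subtrees to choose from. Without this assumption a long path of degree-$2$ vertices would be possible, but the convention that $T$ has no such vertices (stated in the definition of a phylogeny) rules this out.
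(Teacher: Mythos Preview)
Your proof is correct and follows precisely the route the paper intends: the corollary is stated immediately after the Tree Case (Corollary~\ref{cor:tree case}) with no explicit proof, relying implicitly on the well-known fact (from~\cite{ErStSzWa:99a}) that dense phylogenies have chord depth $O(\log n)$. You have simply supplied the standard halving argument for that depth bound, which the paper leaves to the reader.
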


\paragraph{Absolute variant.} All rigorous algorithms prior to our work (see Section~\ref{sec:related}) 
require knowledge of either the tree depth or bounds on the edge lengths to give strong reconstruction guarantees. 
This is not satisfactory from a practical point of view. Here given only the sequence length we provide explicit guarantees. 
The following result assumes that the distorted metric is derived from a Markov model on a tree. (See Appendix~\ref{section:markov} for details.)
\begin{corollary}[Absolute Variant]\label{cor:absolute}
Given a number of samples $k = \Omega(\log n)$ from a Markov model on a tree 
and a chosen level of contraction $\eps > 0$ (small), one can choose
$\tau, M, \radm$ so that
$\acal$ is guaranteed to return a (contracted) subforest of $T$ containing 
$F_{\eps, M'}(T)$ with probability $1 - o(1)$,
where $M' = \Omega_\eps(\log k - \log\log n)$.
\end{corollary}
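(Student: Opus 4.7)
My plan is to turn the $k$ Markov samples into a $(\tau,M)$-distorted metric $\distd$ with the largest $M$ the sample size will support at a user-chosen precision $\tau$, and then to feed $\distd$ into the algorithm $\acal$ of Theorem~\ref{theorem:main} with parameters tuned so that the reconstructed refinement of $F_{4\tau,m-\tau}(T)$ is automatically a refinement of (hence contains) $F_{\eps,M'}(T)$.

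\textbf{Building the distortion.} First I would compute $\distd(u,v)$ for every pair of leaves $u,v$ using the standard log-det / paralinear distance estimator described in Appendix~\ref{section:markov}. A classical large-deviations argument---a Taylor expansion of $-\log$ composed with Hoeffding on the empirical transition frequencies---yields a bound of the shape
\begin{equation*}
\prob\!\left[\,\bigl|\distd(u,v)-\dist(u,v)\bigr|\ge\tau,\ \min\{\dist(u,v),\distd(u,v)\} < M+\tau\,\right]\;\le\;C_1\exp\!\left(-C_2\,\tau^2\,k\,e^{-2(M+\tau)}\right),
\end{equation*}
with $C_1,C_2$ depending only on the mutation parameters. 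Union-bounding over the $\binom{n}{2}$ leaf pairs, $\distd$ is a $(\tau,M)$-distorted metric with probability $1-o(1)$ as soon as
\begin{equation*}
M\;\le\;\tfrac{1}{2}\log k-\tfrac{1}{2}\log\log n-O_\tau(1).
\end{equation*}

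\textbf{Parameter choice and conclusion.} Given $\eps>0$, I would set $\tau=\eps/4$, take $M$ equal to the largest value allowed by the inequality above (so $M=\Omega_\eps(\log k-\log\log n)$), and then pick $\radm$ just below $\tfrac{1}{2}(M-3\tau)$. The hypothesis $k=\Omega(\log n)$ with a large enough hidden constant depending on $\eps$ ensures $\radm>3\tau$, so all prerequisites of Theorem~\ref{theorem:main} are met. With $M':=\radm-\tau=\Omega_\eps(\log k-\log\log n)$, Theorem~\ref{theorem:main}(\ref{item:main2}), applied on the $1-o(1)$ event that $\distd$ is indeed a $(\tau,M)$-distortion, says that $\acal(\distd)$ refines $F_{4\tau,\radm-\tau}(T)=F_{\eps,M'}(T)$, yielding the desired containment with probability $1-o(1)$.

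\textbf{Main obstacle.} The only non-routine step is the concentration bound above; in particular, tracking the $e^{-2(M+\tau)}$ factor---which captures the exponential blow-up of the estimator's variance with depth---is what forces the scaling $M=\Theta(\log k-\log\log n)$ and thereby the promised $M'=\Omega_\eps(\log k-\log\log n)$. Everything else is parameter bookkeeping and a direct invocation of Theorem~\ref{theorem:main}.
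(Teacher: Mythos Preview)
Your proposal is correct and mirrors the paper's intended argument: the paper does not give a separate proof of this corollary, but the relevant ingredient is Lemma~\ref{prop:logdet} in Appendix~\ref{section:markov}, which establishes exactly the concentration bound you sketch (with $(1-e^{-\tau})^2$ in place of your $\tau^2$, equivalent for small $\tau$), after which one plugs into Theorem~\ref{theorem:main} with $\tau=\eps/4$ just as you do. The only cosmetic difference is that the paper bounds the determinant directly via Azuma (following~\cite{ErStSzWa:99b}) rather than Hoeffding on the entries plus a Taylor expansion, but the resulting scaling $M=\Theta_\eps(\log k-\log\log n)$ is the same.
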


\paragraph{Complete resolution.} Finally we remark that, if we further assume that all branch lengths are bounded {\em from below} by a constant, 
then by choosing $\tau$ accordingly a non-contracted forest is returned.  In particular, we can recover the results of~\cite{ErStSzWa:99a}.

\subsection{Related work} \label{sec:related}

Under a Markov model of evolution, 
the Short Quartet Method (SQM) of Erd\"os et al.~\cite{ErStSzWa:99a} is guaranteed 
to recover the full phylogeny as long as the number of samples $k$ satisfies 
\begin{equation*}
k > c f^{-2} e^{c' g \Delta_c(T)} \log n,
\end{equation*}
for constants $c,c' > 0$, where
$f$ and $g$ are respectively lower and upper bounds on the branch lengths 
possibly depending on $n$.
For instance, if $f$ and $g$ are constants 
the sequence length needed for complete reconstruction depends polynomially in the number of species.

%Given a distance matrix, the short-quartet method of Erd\"os et al.~\cite{ErStSzWa:99a} is guaranteed to reconstruct the true phylogeny 
%as long as all branches are sufficiently long (longer than the error in the distance estimates) and sufficiently recent (every edge must 
%be contained in a path between two species whose distance is accurately estimated). In fact, an a priori lower bound on the branch lengths 
%is required for the algorithm to work. As a result of these requirements, the sequence length needed for full reconstruction depends 
%polynomially in the number of species.

Mossel~\cite{Mossel:07} developed a framework that allows the reconstruction of a well-behaved \emph{forest} when sequences are too short to guarantee a complete reconstruction. 
More precisely, edges which are too deep (in the sense of appearing only on paths between species whose distances are not accurately known) are \emph{pruned} from the final reconstruction. 
At a high level, Mossel's Distorted Metric Method (DMM) (implicit in~\cite{Mossel:07}), 
works in a fashion similar to SQM---except for a pre-processing phase that clusters together sufficiently related species. 
However, for DMM to work, lower bounds on the branch lengths are required and, moreover, these must be known by the algorithm. 
Following up on~\cite{Mossel:07}, Daskalakis et al.~\cite{DHJMMR:06} gave a variant of DMM that runs without 
knowledge of a priori bounds on the branch lengths or the tree depth---making their variant somewhat more practical. 
However, like DMM, the algorithm in~\cite{DHJMMR:06} does not deal properly with short edges: any part of the tree containing 
a short edge cannot be reconstructed by the algorithm (even though there may be adjacent edges that are in fact reconstructible). 
Therefore, in the presence of short edges no guarantee can be given about the depth of the reconstructed forest.

Recently Gronau et al.~\cite{GrMoSn:08} eliminated the need for a lower bound on the branch length by \emph{contracting} edges whose length is below a user-defined threshold. 
Their solution uses a Directional Oracle (DO) which closes in on the location of a leaf to be added and, 
in the process, contracts regions that do not provide a reliable directional signal. 
Although the DO algorithm does not use an explicit bound on the depth of the tree, 
their {\em reconstruction guarantee} requires such a bound, similarly to~\cite{ErStSzWa:99a}. 
In particular, Gronau et al.~leave open the question of giving a forest-building version
of their algorithm. 
Moreover, the sequence length in~\cite{GrMoSn:08} depends exponentially on what the authors 
call the $\eps$-diameter of the tree---essentially, the maximum diameter of the contracted regions.  
It is natural to conjecture that an optimal result should not depend on this parameter.

For further related work on efficient phylogeny reconstruction, 
see also~\cite{ErStSzWa:99b,HuNeWa:99,CsurosKao:01,Csuros:02,KiZhZh:03,MosselRoch:06,DaMoRo:06}. 

%Note in particular that the idea of reconstructing pruned, contracted 
%forests was studied by Cryan et al.~\cite{CrGoGo:02} in the learning context.
%The goal there is to reconstruct in polynomial-time 
%a Markov model on a tree with a leaf distribution
%approximately equal to the input sample distribution. 
%This is known as PAC learning (in its distributional version~\cite{KMRRSS:94}).
%Cryan et al.~\cite{CrGoGo:02} showed that in complete 
%generality---i.e.~without any assumption on the Markov transition
%matrices---two-state Markov models on trees can be PAC learned. 
%Their algorithm, which uses techniques similar to the DO algorithm
%of~\cite{GrMoSn:08}, proceeds by showing that ``deep'' edges have little
%effect on the leaf distribution and can therefore be ignored---leading to the reconstruction of a forest.
%Similarly, they show that inverse polynomially short edges can be contracted.
%However, as shown in~\cite{MosselRoch:06}, these results cannot be extended beyond
%the two-state model. Moreover, note that our goal in this paper is quite different: 
%we do not seek to ``reproduce'' the leaf distribution but instead we want to build
%as deep a subforest of the true phylogeny as our given number of samples will allow.
%In particular, we do not require the number of samples to be polynomial.
%Hence our techniques, notably in the reconstruction of subtrees, are very different from those
%in~\cite{CrGoGo:02}.

\subsection{Discussion of the results}

In Table~\ref{fig:comparison} we summarize the current status as discussed in the previous sections.  
 
\begin{figure}
\centering
\begin{tabular}{|c||c|c|c|}
\hline
 &\multicolumn{1}{c|}{\it Execution}&\multicolumn{2}{c|}{\em Guarantees}\\
\cline{2-4}
 &No branch&Short edges&Deep edges\\
 & bound needed&OK&OK\\

\hline \hline
\cite{ErStSzWa:99a}&~&~&~\\
\hline
\cite{Mossel:07}&~&~&$\checkmark$\\
\hline
\cite{DHJMMR:06}&$\checkmark$&~&$\checkmark$\\
\hline
\cite{GrMoSn:08}&$\checkmark$&$\checkmark$&~\\
\hline
{\bf Our method}&\bf $\checkmark$&\bf $\checkmark$&\bf $\checkmark$\\
\hline
\end{tabular}
\caption{Comparison of methods.}\label{fig:comparison}
\end{figure}
As the table emphasizes, our overarching goal is to design an algorithm with good reconstruction guarantees
in the presence of both short and deep edges, whose execution does not rely on a priori bounds on branch lengths.
Unfortunately, given the combinatorial complexity of Mossel's forest-building algorithm, 
it is not straightforward to provide 
the extra flexibility of edge contraction in this framework. 
The novelty in our work is twofold:
\begin{itemize}
\item {\em Solution Concept:} A basic complication is that, in some sense, 
contraction and pruning interfere with each other. 
Indeed, the presence of unresolved branches at the boundary of partially reconstructed subtrees 
creates the possibility of deep ``undetectable'' intersections. 
This pitfall seems to be unavoidable.
One of our main contributions is to introduce the notion of approximate disjointness, 
which allows short but deep intersections between subtrees of the reconstructed forest. 
This suitable solution concept leads to a quite simple algorithm with reasonable guarantees. 
Moreover, the flexibility in our definition allows us to recover all previously known results as special cases.

\item {\em Algorithmic Technique:} A natural approach to forest building used in~\cite{Mossel:07,DHJMMR:06} proceeds along the following three steps: 
\begin{enumerate}
\item first, leaves are grouped into clusters for which all pairwise distances are accurately known (the {\em small} clusters);

\item by definition, the local topologies on the small clusters can be trivially reconstructed~\cite{Buneman:71};

\item finally, the local topologies that intersect in the true tree are ``glued'' together to get a forest (the resulting forest partitions the leaves into {\em large} clusters).

\end{enumerate}
This last step involves non-trivial combinatorial considerations. We have found that further allowing contracted edges makes this process somewhat unmanageable. Instead we use a different approach relying on simple metric arguments. In particular, we {\em directly} partition the leaves into large clusters, whose underlying subtrees are approximately disjoint, and provide a new straightforward method to reconstruct these subtrees. 
\end{itemize}

In addition, we obtain as special cases the results discussed in Section~\ref{sec:related}. 
In particular, if there are no short edges, we recover the results of~\cite{Mossel:07} and~\cite{DHJMMR:06}, 
where a path-disjoint forest is returned (by taking $\tau$ equal to half the lower bound on the branch lengths in Theorem~\ref{theorem:main}). 
If furthermore there is an upper bound on the branch lengths, we recover the results of~\cite{ErStSzWa:99a} (Corollary~\ref{cor:dense}). 
Finally, if we keep the upper bound on the edge lengths, but drop the lower bound, we recover the results of~\cite{GrMoSn:08} (Corollary~\ref{cor:tree case}). 
In fact, we eliminate the dependence on the $\eps$-diameter.~\footnote{After the results of the current paper were posted on the arXiv, 
we were informed by S.~Moran that, in parallel to our work, the authors 
of~\cite{GrMoSn:08} have improved on their previous results: 
the dependence on the $\eps$-diameter has been removed. 
A preprint of this work is currently 
available on the authors' website.
Note however that this new, independent work does not deal with deep 
edges and still makes assumptions similar to~\cite{ErStSzWa:99a} restricting 
the depth of the generating tree.
} 
Further, unlike~\cite{GrMoSn:08}, we allow an arbitrary number of states, an extension---it should be noted---that follows easily from~\cite{ErStSzWa:99b} and~\cite{Mossel:07}.

%In this section, we describe our reconstruction algorithm.
%The algorithm is similar to Mossel's Distorted Metric Method
%(implicit in~\cite{Mossel:07}).
%However, there are several important differences:
%\begin{enumerate}
%\item Since we allow edge weights to be arbitrarily close to
%$0$, our new algorithm also \emph{contracts} edges that are too short to be
%correctly reconstructed.
%\item Because of the previous point, it turns out that extending
%Mossel's ``combinatorial'' constructions to our case is tricky, and we rely to a greater extent
%on ``metric'' considerations. In fact, our new algorithm can be seen as a purely
%metric version of DMM.  
%\item We can only guarantee approximate disjointness of the
%reconstructed forest.
%\end{enumerate}

\subsection{Organization}

The rest of the paper is organized as follows. 
The algorithm is detailed in Section~\ref{section:algorithm}.
The proof of our main theorem follows in Section~\ref{section:proof}.
We conclude with a lower bound in Section~\ref{section:lower}
and a discussion of the running time in Section~\ref{section:running}.
Also, for completeness, in Appendix~\ref{section:markov} 
we describe the probabilistic motivation behind the distorted metric definition. 

The results in this paper were announced without proof in~\cite{DaMoRo:09}.
Also, the counter-example in Section~\ref{section:lower} did not appear in~\cite{DaMoRo:09}.

\section{Algorithm}\label{section:algorithm}

The outline of the algorithm follows. There are three main phases, which are explained in detail after the outline. The input to the algorithm is a $(\tau, M)$-distorted
metric $\distd$ on $n$ leaves. In particular, we assume that the values $\tau$ and
$M$ are known to the algorithm (but see also Corollary~\ref{cor:absolute}). 
Let $\radm$ be as in Theorem~\ref{theorem:main}.
We denote the true tree by $T = (V,E;L,\dist)$. 
The details of the subroutines \minicontractor~and \extender~are detailed in Figures~\ref{figure:minicontractor} and~\ref{figure:extender} (see also their high level description below). 
%For lack of space, the proof of correctness of the algorithm can be found in~\cite{DaMoRo:08arxiv}. 

%\noindent \textit{Input:} $(\tau, M)$-distorted metric $\distd$ on $n$ leaves labelled
%by $[n]$; Parameters $\alpha$, $\alpha'$, $\beta$,
%$\beta'$; (global variables)\\
%\textit{Output:} Estimated forest $\widehat{F}$;
\begin{itemize}
\item \textbf{Pre-Processing: Leaf Clustering.} 
Build the distorted clustering graph
$\cluster{\radm} = (\vcluster{\radm}, \ecluster{\radm})$ where $\vcluster{\radm} = [n]$
and
$(u,v) \in \ecluster{\radm} \iff \distd(u,v) < \radm$;
compute the connected components 
$\{\clusters{\radm}{i} = (\vclusters{\radm}{i},\eclusters{\radm}{i})\}_{i=1}^q$
of $\cluster{\radm}$;
\item \textbf{Main Loop.} 
For all components $i=1,\ldots,q$: 
\begin{itemize}
\item For all pairs of leaves
$u,v \in \vclusters{\radm}{i}$ such that $(u,v) \in \ecluster{\radm}$:  
\begin{itemize}
\item \textbf{Mini Reconstruction.}
Compute 
\begin{equation*}
\{\minipart{j}{u}{v}\}_{j=1}^{\partsize{u}{v}} := \mathrm{\minicontractor}(\clusters{\radm}{i}; u,v);
\end{equation*}
\item \textbf{Bipartition Extension.}
Compute 
\begin{equation*}
\{\fullpart{j}{u}{v}\}_{j=1}^{\partsize{u}{v}} := \mathrm{\extender}(\clusters{\radm}{i}, \{\minipart{j}{u}{v}\}_{j=1}^{\partsize{u}{v}}; u,v);
\end{equation*}
\end{itemize}
\item Deduce the tree $\esttree{i}$ from $\{\fullpart{j}{u}{v}\}_{j=1}^{\partsize{u}{v}}$;
\end{itemize}
\item \textbf{Output.} Return the resulting forest $\widehat{F}$.
\end{itemize}

\paragraph{Pre-processing: Leaf clustering.}
As mentioned before, given a $(\tau,M)$-distortion we cannot hope to reconstruct
edges that are too deep inside the tree. This results in the reconstruction
of a \emph{forest}. Therefore, the first phase of the algorithm is to determine the
``support'' of this forest. We proceed as follows. Consider the following graph on $L$.
\begin{definition}[Clustering Graph]\label{definition:clustering}
Let $M' \in [\tau, \leq M - \tau]$. The \emph{distorted clustering graph with parameter $M'$}, 
denoted $\cluster{M'} = (\vcluster{M'}, \ecluster{M'})$, 
is the following graph:
the vertices $\vcluster{M'}$ are the leaves $L$ of $T$; 
two leaves $u,v \in L$ 
are connected by an edge $e = (u,v) \in \ecluster{M'}$ if
\begin{equation}\label{eq:clustering}
\distd(u,v) < M'.
\end{equation}
Note that this is an undirected graph because $\distd$ is symmetric.
Similarly, we define the \emph{clustering graph with parameter $M'$},
$\clust{M'} = (\vclust{M'}, \eclust{M'})$, where we use
$\dist$ instead $\distd$ in (\ref{eq:clustering}).
\end{definition}
The first phase of the algorithm consists in building the graph $\cluster{\radm}$
from $\distd$. We then compute the connected components of $\cluster{\radm}$ which
we denote $\{\clusters{\radm}{i}\}_{i=1}^q$. In the next two phases, we build a tree
on each of these components.

\paragraph{Building the components I: Mini-reconstruction problem.} 
\begin{figure}
\begin{center}
\input{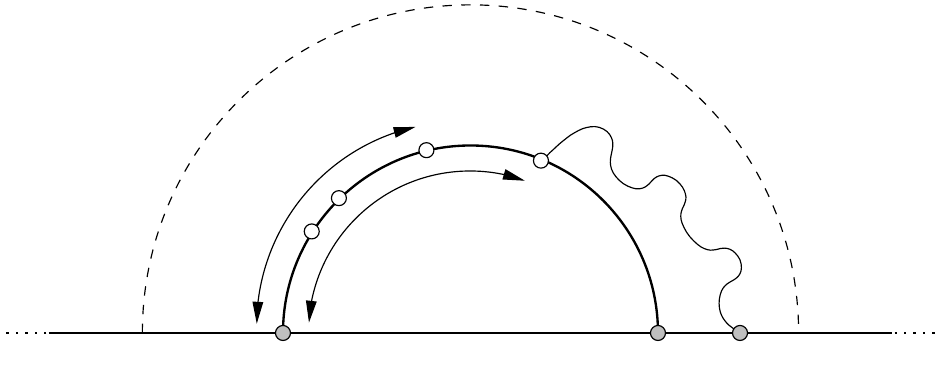_t}\caption{Illustration of routine \minicontractor. See Figure~\ref{figure:minicontractor} for notation.}\label{fig:mini}
\end{center}
\end{figure}
\begin{figure*}
\framebox{
\begin{minipage}{11.7cm}
{\small \textbf{Algorithm} \minicontractor\\
\textit{Input:} Component $\clusters{\radm}{i}$; Leaves $u,v$;\\
\textit{Output:} Bipartitions $\{\minipart{j}{u}{v}\}_{j=1}^{\partsize{u}{v}}$;
\begin{itemize}
\item \textbf{Ball.} Let
\begin{equation*}
\ball{{\bprime}}{i}(u,v) := \left\{w \in \vclusters{\radm}{i}\ :\ 
\distd(u,w) \lor \distd(v,w) < M \right\};
\end{equation*}
\item \textbf{Intersection Points.} For all $w\in \ball{\bprime}{i}(u,v)$, estimate the point of intersection
between $u,v,w$ (distance from $u$), that is,
\begin{equation*}
\intersect_w := \frac{1}{2}\left(
\distd(u,v) + \distd(u,w) - \distd(v,w)
\right);
\end{equation*}
\item \textbf{Long Edges.} 
Set $S := \ball{\bprime}{i}(u,v) - \{u\}$, $x_{-1} = u$, $j := 0$, $C_0 = \{u\}$; 
\begin{itemize}
\item Until $S=\emptyset$:
\begin{itemize}
\item Let $x_0 = \arg\min\{\intersect_w\ :\ w \in S\}$ (break ties arbitrarily); 
\item If $\intersect_{x_0} - \intersect_{x_{-1}} \geq 2 \tau$, create a new edge
by setting $\minipart{j+1}{u}{v} := \{\ball{\bprime}{i}(u,v) - S, S\}$ and let $C_{j+1} := \{x_0\}$, $j := j+1$;
\item Else, set $C_j := C_j \cup \{x_0\}$;
\item Set $S := S-\{x_0\}$, $x_{-1} := x_0$;
\end{itemize}
\end{itemize}
\item \textbf{Output.} Return the bipartitions $\{\minipart{j}{u}{v}\}_{j=1}^{\partsize{u}{v}}$ (where $\partsize{u}{v}$ is the number
of bipartitions generated in the previous step).
\end{itemize}
}
\end{minipage}
} \caption{Algorithm \minicontractor. See Figure~\ref{fig:mini} for illustration.} \label{figure:minicontractor}
\end{figure*}
Fix a component $\clusters{\radm}{i}$ of $\cluster{\radm}$. In this and the next phase,
we seek to reconstruct a contracted tree on $\clusters{\radm}{i}$. 
Denote by
$\truetree{i}$ the true tree $T$ restricted to the leaves in $\clusters{\radm}{i}$.
First, we find all 
edges of $\truetree{i}$ that are ``sufficiently long'' and lie on ``sufficiently short'' paths. 
More precisely, we consider all pairs of leaves $u,v$ connected by an edge in 
$\clusters{\radm}{i}$, that is, leaves within distorted distance $\radm$. For each such
pair $u,v$, the \emph{mini reconstruction problem} consists in finding all edges $e$
in $\path_{\truetree{i}}(u,v)$ that have length larger than $\weight_e \geq 4\tau$.
To do this using the distortion $\distd$, we first consider a \emph{ball} $\ball{\bprime}{i}(u,v)$ of all
nodes within distorted distance $M$ of $u$ and $v$, 
that is,
\begin{equation*}
\ball{\bprime}{i}(u,v) = \left\{w \in \clusters{\radm}{i}\ :\ 
\distd(u,w) \lor \distd(v,w) < M \right\},
\end{equation*}
where $a \lor b$ is the maximum of $a$ and $b$.
The point of using this ball is that
we can then guarantee that each edge in $\path_{\truetree{i}}(u,v)$
is ``witnessed'' by a quartet (i.e., a $4$-tuple of leaves) 
in $\ball{\bprime}{i}(u,v)$ in the following sense: let
$(x_1,x_2)$ be an edge in $\path_{\truetree{i}}(u,v)$ and let $(x_j, y_j)$, $j=1,2$, 
be an edge adjacent 
to $x_j$ that is \emph{not} in $\path_{\truetree{i}}(u,v)$; for $j=1,2$ let $\trueleaves{i}_{x_j \to y_j}$ be the leaves reachable from $y_j$ using paths not including $x_j$; then we will show
that $\trueleaves{i}_{x_j \to y_j} \cap \ball{\bprime}{i}(u,v) \neq \emptyset$ for $j=1,2$. In other words,
there is enough information in $\ball{\bprime}{i}(u,v)$ to reconstruct all edges
in $\path_{\truetree{i}}(u,v)$---at least those that are ``sufficiently long.''  
This phase is detailed in Figure~\ref{figure:minicontractor}.
An illustration is given in Figure~\ref{fig:mini}.

\paragraph{Building the components II: Extending the bipartitions.}
The previous step reconstructs ``sufficiently long'' edges on balls of the form
$\ball{\bprime}{i}(u,v)$. By \emph{reconstructing an edge on $\ball{\bprime}{i}(u,v)$}, we mean
\emph{finding the bipartition of $\ball{\bprime}{i}(u,v)$ to which the edge corresponds}.
More precisely:
\begin{definition}[Bipartitions]
Let $T = (V,E)$ be a multifurcating tree with no vertex of degree $2$. 
Each edge $e$ in $T$ induces a \emph{bipartition}
of the leaves $L$ of $T$ as follows: if one removes the edge $e$ from $T$,
then one is left with two connected components; take the partition of the leaves
corresponding to those components. Denote by $\bipart_T(e)$ the bipartition of $e$
on $T$. It is easy to see that given the bipartitions $\{\bipart_T(e)\}_{e\in E}$
one can reconstruct the tree $T$ efficiently~\cite{Buneman:71,Meacham:81,BandeltDress:86}. 
(Proceed by sequentially ``splitting'' clusters.) 
\end{definition}
The goal of the second phase in the main loop of our reconstruction algorithm is to {\em extend}
the bipartitions previously built from $\ball{\bprime}{i}(u,v)$ to the full component 
$\clusters{\radm}{i}$. To perform this task, we use the following observation: suppose
we want to deduce the bipartition corresponding to edge $e$; since the radius of the
ball $\ball{\bprime}{i}(u,v)$ is much larger than $\radm$,
we can make sure that a path {\em from} a leaf in $\clusters{\radm}{i}$ that is outside 
$\ball{\bprime}{i}(u,v)$ {\em to} a leaf on the other side of the bipartition $\bipart_T(e)$ is
``long.'' Therefore, we can easily determine what side of the partition
each leaf in $\clusters{\radm}{i}$ lies on. For details, see Figure~\ref{figure:extender}.
An illustration is given in Figure~\ref{fig:extend}.
\begin{figure}
\begin{center}
\includegraphics[width=9cm]{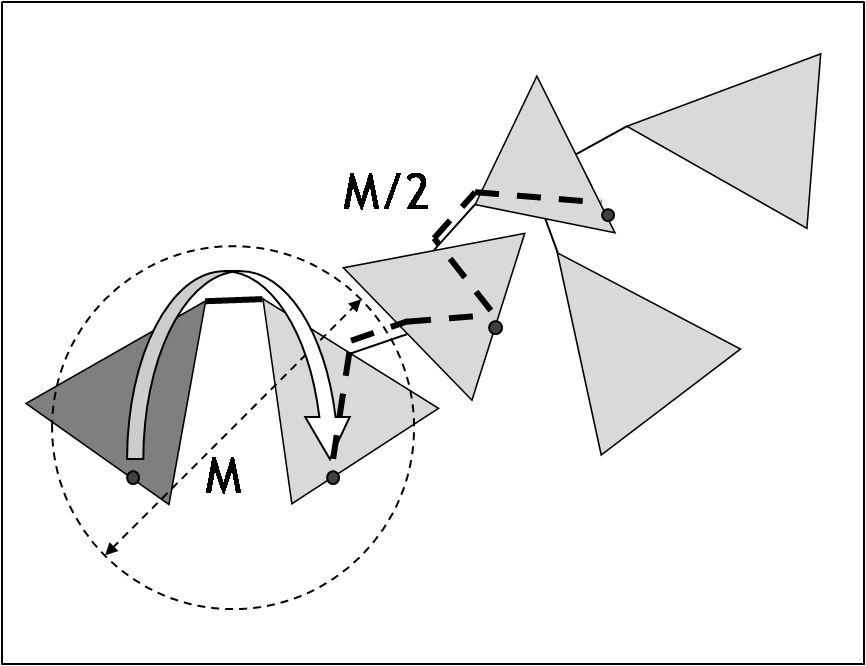}
\caption{Illustration of routine \extender. See also Figure~\ref{figure:extender}.}\label{fig:extend}
\end{center}
\end{figure}
\begin{figure*}
\framebox{
\begin{minipage}{11.7cm}
{\small \textbf{Algorithm} \extender\\
\textit{Input:} Component $\clusters{\radm}{i}$; 
Bipartitions $\{\minipart{j}{u}{v}\}_{j=1}^{\partsize{u}{v}}$; Leaves $u,v$;\\
\textit{Output:} Bipartitions $\{\fullpart{j}{u}{v}\}_{j=1}^{\partsize{u}{v}}$;
\begin{itemize}
\item For $j=1,\ldots,\partsize{u}{v}$ (unless $\partsize{u}{v}=0$):
\begin{itemize}
\item {\bf Initialization.} Denote by $\minipartone{j}{u}{v}$ the vertex set containing $u$ in the bipartition $\minipart{j}{u}{v}$,
and similarly for $v$; Initialize the extended partition $\fullpartone{j}{u}{v} := \minipartone{j}{u}{v}$,
$\fullparttwo{j}{u}{v} := \miniparttwo{j}{u}{v}$;
\item {\bf Modified Graph.} Let $K$ be $\clusters{\radm}{i}$ where all edges between
$\minipartone{j}{u}{v}$ and $\miniparttwo{j}{u}{v}$ have been removed;
\item {\bf Extension.} For all $w \in \vclusters{\radm}{i} - (\minipartone{j}{u}{v}\cup\miniparttwo{j}{u}{v})$,
add $w$ to the side of the partition it is connected to in
$K$ (by Proposition~\ref{proposition:outside}, each $w$ as above is connected to exactly
one side); 
\end{itemize}
\item Return the bipartitions $\{\fullpart{j}{u}{v}\}_{j=1}^{\partsize{u}{v}}$.
\end{itemize}
}
\end{minipage}
} \caption{Algorithm \extender. See Figure~\ref{fig:extend} for an illustration.} \label{figure:extender}
\end{figure*}

\section{Analysis of the Algorithm}\label{section:proof}

We assume throughout that $\distd$ is a $(\tau,M)$-distortion of $\dist$ and
moreover that $\radm$ satisfies the conditions of Theorem~\ref{theorem:main}.

\subsection{Leaf clustering: Determining the support of the forest}\label{section:clustering}

Recall the notation of Definition~\ref{definition:clustering}.
\begin{proposition}[Leaf Clustering]\label{proposition:clustering}
Let $\tau \leq M' \leq M-\tau$. 
Then
\begin{equation*}
\eclust{M'-\tau} \subseteq \ecluster{M'} \subseteq \eclust{M'+\tau}.
\end{equation*}
\end{proposition}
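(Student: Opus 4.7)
The plan is to prove both inclusions by a direct application of the \emph{Distortion} property of $\distd$ (item 2 of Definition~\ref{def:distorted metric}). The only subtlety is verifying that the hypothesis triggering the distortion bound is satisfied in each direction, which is where the constraint $\tau \leq M' \leq M - \tau$ will be used.

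For the first inclusion $\eclust{M'-\tau} \subseteq \ecluster{M'}$, I would take an edge $(u,v) \in \eclust{M'-\tau}$, so $\dist(u,v) < M' - \tau$. Since $M' \leq M - \tau$, this gives $\dist(u,v) < M - 2\tau < M + \tau$, so the Distortion clause applies and yields $|\dist(u,v) - \distd(u,v)| < \tau$. Therefore $\distd(u,v) < \dist(u,v) + \tau < M'$, so $(u,v) \in \ecluster{M'}$.

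For the second inclusion $\ecluster{M'} \subseteq \eclust{M'+\tau}$, I would take $(u,v) \in \ecluster{M'}$, so $\distd(u,v) < M' \leq M - \tau < M + \tau$. Again the Distortion clause applies (this time via the second of its two trigger conditions) and gives $|\dist(u,v) - \distd(u,v)| < \tau$, so $\dist(u,v) < \distd(u,v) + \tau < M' + \tau$, proving $(u,v) \in \eclust{M' + \tau}$.

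There is no real obstacle here: the proof is essentially bookkeeping on the distortion inequality, and the only thing to be careful about is ensuring in each direction that either $\dist(u,v)$ or $\distd(u,v)$ lies below the threshold $M + \tau$ at which the distortion guarantee kicks in. The hypothesis $M' \leq M - \tau$ is exactly what makes this automatic in both cases.
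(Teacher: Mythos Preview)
Your proof is correct and follows the same approach as the paper's proof, which also argues each inclusion directly from the distortion inequality. In fact you are slightly more careful than the paper, which does not explicitly verify that the trigger condition $\dist(u,v) < M+\tau$ (resp.\ $\distd(u,v) < M+\tau$) holds before invoking the distortion bound.
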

\begin{proof}
This follows immediately from the definition of $\distd$. Indeed,
if $\dist(u,v) < M'-\tau$ then
\begin{equation*}
\distd(u,v) < \dist(u,v) + \tau < (M'-\tau) + \tau < M'.
\end{equation*}
Similarly, if $\distd(u,v) < M'$ then
\begin{equation*}
\dist(u,v) < \distd(u,v) + \tau < M' + \tau.
\end{equation*}
\end{proof}

\subsection{Mini-reconstruction: Finding long edges on short paths}\label{section:mini}

Consider a component $\clusters{\radm}{i} = (\vclusters{\radm}{i}, \eclusters{\radm}{i})$ 
of $\cluster{\radm}$. Denote by
$\truetree{i} = (\truevertices{i}, \trueedges{i})$ the tree 
$T$ restricted to the leaves in $\vclusters{\radm}{i}$, that is,
\begin{itemize}
\item Keep only those edges of $T$ that are on paths between leaves in 
$\vclusters{\radm}{i}$;
\item Glue together edges adjacent to vertices of degree $2$; 
\item Equip $\truetree{i}$ with the metric $\dist$ restricted to 
$\vclusters{\radm}{i}\times \vclusters{\radm}{i}$ and
denote $\{\weight^{(i)}_e\}_{e\in \trueedges{i}}$ the corresponding weights.
\end{itemize}
\begin{proposition}[Chord Depth of $\truetree{i}$]\label{proposition:chord}
The chord depth of $\truetree{i}$ is less than $\radm + \tau$.
\end{proposition}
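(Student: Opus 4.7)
The plan is to exploit the connectedness of $\clusters{\radm}{i}$ in the distorted clustering graph together with the distortion bound of Proposition~\ref{proposition:clustering}. Fix an edge $e \in \trueedges{i}$. Removing $e$ from $\truetree{i}$ splits the leaf set $\vclusters{\radm}{i}$ into two nonempty parts $A$ and $B$ (the bipartition $\bipart_{\truetree{i}}(e)$). To bound $\pathdepth(e)$, I would like to produce a pair of leaves $u \in A$, $v \in B$ with $\dist(u,v)$ small, since then $e \in \path_{\truetree{i}}(u,v)$ automatically.

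Here is where connectedness enters. By construction, $\clusters{\radm}{i}$ is a connected component of $\cluster{\radm}$, so if I look at the induced subgraph of $\cluster{\radm}$ on $\vclusters{\radm}{i}$, it is connected. Therefore the bipartition $(A,B)$ of $\vclusters{\radm}{i}$ cannot be a cut in this induced subgraph: there must exist $u \in A$ and $v \in B$ with $(u,v) \in \ecluster{\radm}$, i.e., $\distd(u,v) < \radm$. Since $\radm < \tfrac12[M - 3\tau] < M + \tau$, I can invoke the distortion property in Definition~\ref{def:distorted metric} to conclude $\dist(u,v) < \distd(u,v) + \tau < \radm + \tau$. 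This is exactly Proposition~\ref{proposition:clustering} applied with $M' = \radm$, giving $\ecluster{\radm} \subseteq \eclust{\radm + \tau}$.

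Putting these together, for this choice of $u,v$ the edge $e$ lies on $\path_{\truetree{i}}(u,v)$ (as $u$ and $v$ are on opposite sides of the bipartition it induces), and $\dist(u,v) < \radm + \tau$. Hence $\pathdepth(e) \leq \dist(u,v) < \radm + \tau$. Taking the maximum over all edges of $\truetree{i}$ yields $\pathdepth(\truetree{i}) < \radm + \tau$, as required.

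The only subtle point is making sure the pair $(u,v)$ actually certifies $\pathdepth(e)$ in the restricted tree: I should note that $\path_{\truetree{i}}(u,v)$ corresponds to the image (under the degree-$2$ contraction used to form $\truetree{i}$) of $\path_{T}(u,v)$, so that crossing $e$ in $\truetree{i}$ is equivalent to crossing the corresponding edge or path in $T$. This is just the observation that restriction preserves additive distances between retained leaves. No other obstacle arises; the proof is essentially a one-line consequence of the connectedness of the distorted clustering component combined with the symmetric distortion bound.
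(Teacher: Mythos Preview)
Your proof is correct and is essentially the same as the paper's: the paper argues by contradiction (if some edge had chord depth $\geq \radm+\tau$ then every cross-bipartition pair would have $\distd\geq\radm$, disconnecting $\clusters{\radm}{i}$), while you give the direct contrapositive (connectedness forces a cross-bipartition edge of $\cluster{\radm}$, hence a pair with $\dist<\radm+\tau$). The underlying idea---connectedness of the distorted clustering component plus the distortion bound---is identical.
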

\begin{proof}
We argue by contradiction.
Let $e$ be an edge in $\truetree{i}$.
Suppose that the chord depth of $e$ in $\truetree{i}$ is $\geq \radm + \tau$.
Consider the bipartition $\{\psi^{(1)}, \psi^{(2)}\}$ defined by $e$ in $\truetree{i}$.
Then it follows that for all $u_1 \in \psi^{(1)}$ and $u_2 \in \psi^{(2)}$, we have
\begin{equation*}
\distd(u_1,u_2) > \dist(u_1,u_2) - \tau \geq \radm,
\end{equation*}
so that $\clusters{\radm}{i}$ cannot be connected, a contradiction.
\end{proof}

Let $e' = (u',v')$ be an edge in a tree $T'$ with leaf set $L'$. We denote by 
$L'_{u' \to v'}$ the leaves of $T'$ that can be reached from $v'$ without
going through $u'$. Recall that for two leaves $u',v'$ of $T'$, we denote by $\vpath_{T'}(u',v')$
the set of vertices on the path between $u'$ and $v'$ in $T'$.
Recall also that
\begin{equation*}
\ball{\bprime}{i}(u,v) = \left\{w \in \vclusters{\radm}{i}\ :\ 
\distd(u,w) \lor \distd(v,w) < M \right\},
\end{equation*}
for $u,v\in\vclusters{\radm}{i}$.
\begin{proposition}[Witnesses in $\ball{\bprime}{i}(u,v)$]\label{proposition:witness}
Assume that $2 \radm + 3\tau < M$.
Let $(u,v) \in \eclusters{\radm}{i}$.
Let $(x,y)$ be an edge of $\truetree{i}$ such that $x \in \vpath_{\truetree{i}}(u,v)$ but
$y \notin \vpath_{\truetree{i}}(u,v)$. Then we have
\begin{equation*}
\ball{\bprime}{i}(u,v)\cap \trueleaves{i}_{x \to y} \neq \emptyset,
\end{equation*}
where $\trueleaves{i}$ is the set of leaves of $\truetree{i}$.
\end{proposition}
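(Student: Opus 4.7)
The plan is to produce the required leaf $w$ by invoking the chord-depth bound (Proposition on $\pathdepth(\truetree{i})$) applied to the edge $(x,y)$, and then to verify, via the triangle inequality and the distortion axiom, that $w$ lies inside $\ball{\bprime}{i}(u,v)$.

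First I would use Proposition on the chord depth of $\truetree{i}$: there exist leaves $a, b \in \vclusters{\radm}{i}$ with $(x,y) \in \path_{\truetree{i}}(a,b)$ and $\dist(a,b) < \radm + \tau$. Because $(x,y)$ lies on $\path_{\truetree{i}}(a,b)$, exactly one of $a,b$ is reachable from $y$ without going through $x$; call that endpoint $w$, so $w \in \trueleaves{i}_{x \to y}$. From $\dist(a,x) + \dist(x,b) = \dist(a,b)$ we immediately get $\dist(x,w) < \radm + \tau$.

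Next I bound the two distorted distances $\distd(u,w)$ and $\distd(v,w)$. From $(u,v) \in \eclusters{\radm}{i}$ we have $\distd(u,v) < \radm \leq M + \tau$, so the distortion axiom yields $\dist(u,v) < \radm + \tau$. Since $x \in \vpath_{\truetree{i}}(u,v)$, the additivity of $\dist$ along this path gives $\dist(u,x) \leq \dist(u,v) < \radm + \tau$, and similarly $\dist(v,x) < \radm + \tau$. The triangle inequality then produces
\begin{equation*}
\dist(u,w) \leq \dist(u,x) + \dist(x,w) < 2\radm + 2\tau,
\end{equation*}
and the same bound for $\dist(v,w)$. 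Using the hypothesis $2\radm + 3\tau < M$, both of these true distances are strictly below $M + \tau$, so the distortion axiom applies once more and yields
\begin{equation*}
\distd(u,w) < \dist(u,w) + \tau < 2\radm + 3\tau < M,
\end{equation*}
and analogously $\distd(v,w) < M$. Therefore $w \in \ball{\bprime}{i}(u,v)$, completing the argument.

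There is no real obstacle here: the proof is a bookkeeping exercise that chains together the chord-depth bound with two applications of the distortion axiom. The only point requiring care is to make sure each application of the distortion inequality is legal, i.e.\ that the relevant true or distorted distance falls below $M + \tau$; this is exactly where the quantitative hypothesis $2\radm + 3\tau < M$ is used, and it leaves no slack, which explains the presence of this precise constant in the statement.
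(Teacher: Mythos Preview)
Your proof is correct and follows essentially the same route as the paper's: invoke the chord-depth bound to obtain a witness leaf on the far side of $(x,y)$, bound $\dist(u,x)$ and $\dist(v,x)$ via $\dist(u,v)<\radm+\tau$, and then use the triangle inequality plus the distortion axiom to place the witness inside $\ball{\bprime}{i}(u,v)$. If anything, you are slightly more careful than the paper in explicitly checking that the threshold $M+\tau$ is met before each application of the distortion inequality.
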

\begin{proof}
By Proposition~\ref{proposition:chord}, there are leaves $x_0, y_0$ in
$\trueleaves{i}$ such that $(x,y) \in \path_{\truetree{i}}(x_0,y_0)$ and
$\dist(x_0,y_0) < \radm + \tau$. Assume without loss of generality that
$y_0 \in \trueleaves{i}_{x \to y}$. By assumption, 
\begin{equation*}
\dist(u,v) < \distd(u,v) + \tau < \radm + \tau.
\end{equation*}
Therefore,
\begin{equation*}
\dist(u,y_0) \leq \dist(u,x) + \dist(x,y_0) \leq \dist(u,v) + \dist(x_0,y_0) 
< 2\radm + 2\tau,
\end{equation*}
from which we get $\distd(u,y_0) < 2\radm + 3\tau < M$.
The same inequality holds for $\distd(v,y_0)$.
\end{proof}

Fix a pair of leaves $u,v$ with
$(u,v) \in \eclusters{\radm}{i}$. For $w\in \ball{\bprime}{i}(u,v)$, let 
\begin{equation*}
\intersect_w := \frac{1}{2}\left(
\distd(u,v) + \distd(u,w) - \distd(v,w)
\right),
\end{equation*}
and
\begin{equation*}
\trueintersect_w := \frac{1}{2}\left(
\dist(u,v) + \dist(u,w) - \dist(v,w)
\right).
\end{equation*}
Note that $\trueintersect_w$ is the distance between $u$ and
the intersection point of $\{u,v,w\}$.
Let $\{C_j\}_{j=0}^{\partsize{u}{v}}$ and $\{\minipart{j}{u}{v}\}_{j=1}^{\partsize{u}{v}}$ be as in
Figure~\ref{figure:minicontractor}. We write $w \sim w'$ if
$w,w' \in C_j$ for some $j$. Similarly, we write $w \lesssim w'$
(respectively $w < w'$)
if $w \in C_j$ and $w' \in C_{j'}$ with $j \leq j'$ (respectively $j < j'$).
\begin{proposition}[Intersection Points]\label{proposition:intersection}
Let $u,v$ be as above.
Then we have the following:
\begin{enumerate}
\item \label{item:1}\itemname{Identity}
If $x, y \in \ball{\bprime}{i}(u,v)$ are such that $\trueintersect_{x} = \trueintersect_{y}$ then
$x \sim y$;
\item \label{item:2}\itemname{Precedence}
If $x, y \in \ball{\bprime}{i}(u,v)$ are such that $\trueintersect_{x} \leq \trueintersect_{y}$ then
$x \lesssim y$;
\item \label{item:3}\itemname{Separation}
If $x, y \in \ball{\bprime}{i}(u,v)$ are such that $\trueintersect_{x} < \trueintersect_{y} - 4\tau$ 
and there is no $z \in \ball{\bprime}{i}(u,v)$ with $\trueintersect_{x} < \trueintersect_{z} < \trueintersect_{y}$, 
then $x < y$.
\end{enumerate}
\end{proposition}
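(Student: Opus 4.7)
The plan is to reparametrise each $\intersect_w$ so that the ``$w$-independent'' part of the noise cancels on pairwise comparisons. Since $(u,v)\in\eclusters{\radm}{i}$ forces $\distd(u,v)<\radm<M$ and $w\in\ball{\bprime}{i}(u,v)$ gives $\distd(u,w),\distd(v,w)<M$, the distortion property applies to each of the three distances entering $\intersect_w$. Set
\begin{equation*}
\Delta'=\tfrac{1}{2}\bigl(\distd(u,v)-\dist(u,v)\bigr),\qquad
\eps_w=\tfrac{1}{2}\bigl[(\distd(u,w)-\dist(u,w))-(\distd(v,w)-\dist(v,w))\bigr].
\end{equation*}
Then $\intersect_w=\trueintersect_w+\Delta'+\eps_w$, where $\Delta'$ does not depend on $w$ and $|\eps_w|<\tau$. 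Consequently, for any two ball-members $x,y$,
\begin{equation*}
(\intersect_x-\intersect_y)-(\trueintersect_x-\trueintersect_y)=\eps_x-\eps_y\in(-2\tau,2\tau).
\end{equation*}

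For Identity, $\trueintersect_x=\trueintersect_y$ gives $|\intersect_x-\intersect_y|<2\tau$. Taking WLOG $\intersect_x\le\intersect_y$, the consecutive gaps along the sorted-order chain from $x$ to $y$ are nonnegative and sum to $\intersect_y-\intersect_x<2\tau$, so none of them reaches the algorithm's $2\tau$ cluster-opening threshold, and $x\sim y$. Precedence follows by the same template: $\trueintersect_x\le\trueintersect_y$ yields $\intersect_x-\intersect_y<2\tau$, and either $\intersect_x\le\intersect_y$ (so $x\lesssim y$ is immediate) or $0<\intersect_x-\intersect_y<2\tau$, in which case the same chain argument on $[\intersect_y,\intersect_x]$ gives $y\sim x$ and a fortiori $x\lesssim y$.

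For Separation, the no-intermediate-$z$ hypothesis partitions $\ball{\bprime}{i}(u,v)=A\cup B$ with $A=\{z:\trueintersect_z\le\trueintersect_x\}$ and $B=\{z:\trueintersect_z\ge\trueintersect_y\}$, containing $x$ and $y$ respectively. From $\intersect_z=\trueintersect_z+\Delta'+\eps_z$ with $|\eps_z|<\tau$, every $z\in A$ satisfies $\intersect_z<\trueintersect_x+\Delta'+\tau$, while every $z\in B$ satisfies $\intersect_z>\trueintersect_y+\Delta'-\tau$. Since $\trueintersect_y-\trueintersect_x>4\tau$,
\begin{equation*}
(\trueintersect_y+\Delta'-\tau)-(\trueintersect_x+\Delta'+\tau)=(\trueintersect_y-\trueintersect_x)-2\tau>2\tau,
\end{equation*}
so in the sorted $\intersect$-order every $A$-point strictly precedes every $B$-point, and the single gap between $\max_A\intersect$ and $\min_B\intersect$ exceeds $2\tau$. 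The algorithm therefore opens a new cluster exactly at the $A/B$ interface, placing $x\in A$ in a strictly earlier cluster than $y\in B$, i.e., $x<y$.

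The main obstacle is the error bookkeeping: a naïve estimate $|\intersect_w-\trueintersect_w|<3\tau/2$ yields a Separation gap of only $\trueintersect_y-\trueintersect_x-3\tau$, which falls short of $2\tau$ under the stated hypothesis $\trueintersect_y-\trueintersect_x>4\tau$. Recognising that the error in $\distd(u,v)$ enters every $\intersect_w$ as the \emph{same} additive constant $\Delta'$ and therefore cancels on pairwise differences tightens the effective per-point error to $\tau$ and makes all three statements go through with the stated constants.
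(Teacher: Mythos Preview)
Your proof is correct and follows essentially the same approach as the paper's: the crucial observation that the $\distd(u,v)$-error is common to all $\intersect_w$ and hence cancels on pairwise differences (your $\Delta'/\eps_w$ decomposition) is exactly what the paper invokes parenthetically, and your $A/B$ partition in Part~\ref{item:3} is the paper's $X_0/Y_0$. Your chain-of-gaps argument for Parts~\ref{item:1}--\ref{item:2} is slightly more explicit than the paper's terser ``necessarily placed in the same $C_j$'' and contradiction argument, but the content is the same.
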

\begin{proof}
For Part~\ref{item:1}, note that $\trueintersect_x = \trueintersect_y$ implies
\begin{equation*}
\left|\intersect_x - \intersect_y\right| < 2\tau.
\end{equation*}
(Note that the term $\distd(u,v)$ appears in both $\intersect_x$ and $\intersect_y$ and therefore does not contribute
to the error. The same argument applies to the error calculations below.)
Therefore, $x$ and $y$ are necessarily placed in the same $C_j$, that is,
$x \sim y$. See Figure~\ref{figure:minicontractor}.

For Part~\ref{item:2}, suppose by contradiction that $x > y$. Then
we have necessarily
\begin{equation*}
\intersect_x \geq \intersect_y + 2\tau,
\end{equation*}
which implies
\begin{equation*}
\trueintersect_y < \trueintersect_x - 2\tau + 2\tau \leq \trueintersect_x,
\end{equation*}
a contradiction.

For Part~\ref{item:3}, let
\begin{equation*}
X_0 = \{w\in \ball{\bprime}{i}(u,v)\ \mathrm{s.t.}\ \trueintersect_w \leq \trueintersect_x\},
\end{equation*}
\begin{equation*}
Y_0 = \{w\in \ball{\bprime}{i}(u,v)\ \mathrm{s.t.}\ \trueintersect_w \geq \trueintersect_y\},
\end{equation*}
\begin{equation*}
x_0 = \arg\max \{\intersect_w\ :\ w\in X_0\},
\end{equation*}
(breaking ties arbitrarily) and similarly
\begin{equation*}
y_0 = \arg\min \{\intersect_w\ :\ w\in Y_0\}.
\end{equation*}
Note that by assumption the pair $X_0, Y_0$ forms a partition of $\ball{\bprime}{i}(u,v)$.
By assumption,
\begin{equation*}
\trueintersect_{x_0} \leq \trueintersect_{x} < \trueintersect_{y} - 4\tau \leq \trueintersect_{y_0} - 4\tau,
\end{equation*}
which implies for all $x' \in X_0$ and $y' \in Y_0$
\begin{equation*}
\intersect_{y'} \geq \intersect_{y_0} > \intersect_{x_0} + 4\tau - 2\tau \geq \intersect_{x_0} + 2\tau \geq \intersect_{x'} + 2\tau.
\end{equation*}
Therefore, we have $x < y$.
\end{proof}

\begin{proposition}[Mini Reconstruction]\label{proposition:mini}
Let $u,v$ be as above.
Assume that $2 \radm + 3\tau < M$.
Then we have the following:
\begin{enumerate}
\item \label{item:11}\itemname{Reconstructed\ Edges\ Are\ Correct}
For each $j = 1,\ldots,\partsize{u}{v}$, there is a unique edge $e$ in $\trueedges{i}$
such that
\begin{equation*}
\bipart_{\truetree{i}}(e)\cap \ball{\bprime}{i}(u,v) = \minipart{j}{u}{v},
\end{equation*}
where the intersection on the left is applied separately to each
set in the partition;

\item \label{item:22}\itemname{Long\ Edges\ Are\ Present}
Let $e \in \trueedges{i}$ with $e\in \path_{\truetree{i}}(u,v)$ and $\weight^{(i)}_e > 4 \tau$.
Then there is a unique $j$ such that
\begin{equation*}
\bipart_{\truetree{i}}(e)\cap \ball{\bprime}{i}(u,v) = \minipart{j}{u}{v}.
\end{equation*}
\end{enumerate}
\end{proposition}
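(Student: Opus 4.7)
The plan is to translate the algorithm's behavior on the estimated quantities $\intersect_w$ into statements about the true intersection points $\trueintersect_w$, using Proposition~\ref{proposition:intersection}. The key structural observation is that for any leaf $w \in \ball{\bprime}{i}(u,v)$, $\trueintersect_w$ equals $\dist(u,z_w)$, where $z_w$ is the unique vertex on $\path_{\truetree{i}}(u,v)$ at which the subtree containing $w$ branches off. Writing $u=v_0,v_1,\ldots,v_m=v$ for the vertices of this path in $\truetree{i}$, the values $\trueintersect_w$ therefore lie in the discrete set $\{\dist(u,v_0),\ldots,\dist(u,v_m)\}$. Moreover, for an edge $e=(v_k,v_{k+1})$ on the path, $\bipart_{\truetree{i}}(e)\cap \ball{\bprime}{i}(u,v)$ is exactly the split of the ball according to whether $\trueintersect_w\leq \dist(u,v_k)$ or $\trueintersect_w\geq \dist(u,v_{k+1})$.

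For Part~\ref{item:11}, fix a bipartition $\minipart{j}{u}{v}$. The algorithm created it when the last element $x_{-1}$ of $C_{j-1}$ and the first element $x_0$ of $C_j$ satisfied $\intersect_{x_0}-\intersect_{x_{-1}}\geq 2\tau$. The distortion bounds (applied after noting that the $\distd(u,v)$ terms cancel in the difference) yield $\trueintersect_{x_0}>\trueintersect_{x_{-1}}$, so there is a smallest $k_j$ with $\dist(u,v_{k_j})>\trueintersect_{x_{-1}}$, and this $k_j$ also satisfies $\dist(u,v_{k_j})\leq\trueintersect_{x_0}$. Setting $e_j=(v_{k_j-1},v_{k_j})$, the Precedence property shows that every leaf in the ``left'' side of $\minipart{j}{u}{v}$ has $\trueintersect_w\leq\trueintersect_{x_{-1}}\leq \dist(u,v_{k_j-1})$ (hence branches off at or before $v_{k_j-1}$), while every leaf in the ``right'' side has $\trueintersect_w\geq\trueintersect_{x_0}\geq \dist(u,v_{k_j})$ (hence branches off at or after $v_{k_j}$). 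This identifies $\minipart{j}{u}{v}$ with $\bipart_{\truetree{i}}(e_j)\cap \ball{\bprime}{i}(u,v)$; uniqueness of $e_j$ is immediate from the uniqueness of $k_j$.

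For Part~\ref{item:22}, let $e=(v_k,v_{k+1})$ on $\path_{\truetree{i}}(u,v)$ satisfy $\weight^{(i)}_e>4\tau$. Proposition~\ref{proposition:witness}, applied to any off-path edge at $v_k$ (and likewise at $v_{k+1}$), produces leaves $x_L,x_R\in \ball{\bprime}{i}(u,v)$ with $\trueintersect_{x_L}=\dist(u,v_k)$ and $\trueintersect_{x_R}=\dist(u,v_{k+1})$; in the boundary cases $v_k=u$ or $v_{k+1}=v$ one takes $x_L=u$ or $x_R=v$. Thus $\trueintersect_{x_R}-\trueintersect_{x_L}=\weight^{(i)}_e>4\tau$. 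Since $v_k$ and $v_{k+1}$ are adjacent on the path, no leaf in $\ball{\bprime}{i}(u,v)$ has $\trueintersect$ strictly between these values, so Separation gives $x_L<x_R$, forcing at least one algorithmic cut between their classes. Identity places all leaves with $\trueintersect_w=\dist(u,v_k)$ in a single class $C_{j_L}$ containing $x_L$, and all those with $\trueintersect_w=\dist(u,v_{k+1})$ in a single class $C_{j_R}$ containing $x_R$; the absence of intermediate values forces $j_R=j_L+1$, so the cut is unique, and $\minipart{j_R}{u}{v}$ coincides with $\bipart_{\truetree{i}}(e)\cap \ball{\bprime}{i}(u,v)$ by the Part~\ref{item:11} argument.

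The main obstacle is the bookkeeping around short edges: when several consecutive edges on the $u$-$v$ path all have weight $\leq 4\tau$, their branching vertices may end up merged inside a single $C_j$, and one must verify that the bipartitions which \emph{do} survive still cleanly correspond to individual edges of $\truetree{i}$. This is handled by the monotone discreteness of $\trueintersect_w$ along $\path_{\truetree{i}}(u,v)$ together with Identity and Precedence, which together ensure that every cut the algorithm performs separates the ball along exactly one edge and that every edge of weight $>4\tau$ produces one such cut.
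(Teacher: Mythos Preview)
Your overall strategy matches the paper's: reduce everything to Propositions~\ref{proposition:witness} and~\ref{proposition:intersection}. Part~\ref{item:22} is handled correctly. But there is a genuine gap in your Part~\ref{item:11} argument.

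You claim that Precedence gives $\trueintersect_w\leq\trueintersect_{x_{-1}}$ for every $w$ on the left side of the cut. This is false. Precedence (in contrapositive form) only says that if $w$ lies in a \emph{strictly earlier} class than $x_{-1}$ then $\trueintersect_w<\trueintersect_{x_{-1}}$; it says nothing about $w\in C_{j-1}$, the same class as $x_{-1}$. The algorithm orders leaves by the \emph{estimated} quantity $\intersect$, so the last element $x_{-1}$ of $C_{j-1}$ has maximal $\intersect$ among the left side but need not have maximal $\trueintersect$. Concretely, take the path $u,v_1,v_2,v$ with edge weights $10\tau,\tau,10\tau$, leaves $w_1,w_2$ in the ball branching at $v_1,v_2$ respectively, and distortions arranged so that $\intersect_{w_2}<\intersect_{w_1}$ (both within $\tau$ of their true values). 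The algorithm puts $w_1,w_2$ in the same class with $x_{-1}=w_1$, yet $\trueintersect_{w_2}=11\tau>10\tau=\trueintersect_{x_{-1}}$. Your rule then sets $k_j=2$ and $e_j=(v_1,v_2)$, whereas the bipartition $\{\{u,w_1,w_2\},\{v\}\}$ actually corresponds to the edge $(v_2,v)$.

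The correct version, which is what the paper does, compares the two \emph{sides} rather than the two boundary elements: by Precedence, every left $w$ and right $w'$ satisfy $w<w'$ and hence $\trueintersect_w<\trueintersect_{w'}$, so the cut separates the ball at some threshold among the discrete values $\dist(u,v_\ell)$. To conclude that this threshold singles out a \emph{unique} edge of $\truetree{i}$ you must also invoke Proposition~\ref{proposition:witness}: it guarantees that every internal vertex $v_\ell$ on the path has a witness in the ball, so consecutive realized values of $\trueintersect$ differ by exactly one edge. You use Witness in Part~\ref{item:22} but omit it in Part~\ref{item:11}, where it is equally essential for uniqueness.
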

\begin{proof}
Part~\ref{item:11} follows from Proposition~\ref{proposition:witness} and
Proposition~\ref{proposition:intersection} Part~\ref{item:2}. Indeed, by 
Proposition~\ref{proposition:intersection} Part~\ref{item:2}, $\minipart{j}{u}{v}$
is a correct bipartition of $\truetree{i}$ restricted to $\ball{\bprime}{i}(u,v)$.
It corresponds to a unique edge of the latter tree because it is a full bipartition
of $\ball{\bprime}{i}(u,v)$. By Proposition~\ref{proposition:witness}, every edge
of $\truetree{i}$ is witnessed in $\ball{\bprime}{i}(u,v)$, so $\minipart{j}{u}{v}$
must also correspond to a unique edge in $\truetree{i}$.

Similarly, Part~\ref{item:22} follows from Proposition~\ref{proposition:witness}
and Proposition~\ref{proposition:intersection} Parts~\ref{item:2} and~\ref{item:3}.
\end{proof}

\subsection{Extending bipartitions: Reconstructing the components}\label{section:bipartitions}

Let $u, v \in \vclusters{\radm}{i}$ with $(u,v)\in \eclusters{\radm}{i}$
and let $\minipart{j}{u}{v}$ be one of the bipartitions returned by \minicontractor\ when
given $(\clusters{\radm}{i}; u, v)$ as input.
Let $e = (x,y) \in \trueedges{i}$ be the edge of $\truetree{i}$
corresponding to $\minipart{j}{u}{v}$ (as guaranteed by Proposition~\ref{proposition:mini})
and denote its bipartition by
\begin{equation*}
\bipart_{\truetree{i}}(e) = \{\bipart^{(u)},
\bipart^{(v)}\},
\end{equation*}
where $\bipart^{(u)}$ and $\bipart^{(v)}$ are respectively the sides containing
$u$ and $v$.
\begin{proposition}[Leaves Outside Ball]\label{proposition:outside}
Assume that $2\radm + 3\tau < M$.
Let $w \in \vclusters{\radm}{i} - \ball{\bprime}{i}(u,v)$.
Assume that $w \in \bipart^{(v)}$. 
Then, for all leaves $w'$ in $\bipart^{(u)}$ we have
\begin{equation*}
\distd(w,w') \geq \radm.
\end{equation*}
\end{proposition}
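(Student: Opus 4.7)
The plan is to argue by contradiction. Suppose $\distd(w,w') < \radm$ for some $w' \in \bipart^{(u)}$. Since $\radm < M < M + \tau$, the distortion property immediately yields $\dist(w,w') < \radm + \tau$, so the task reduces to deriving the opposite strict lower bound on the true distance $\dist(w,w')$ by exploiting the tree structure together with the hypothesis $2\radm + 3\tau < M$.

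First I would unpack the hypothesis $w \notin \ball{\bprime}{i}(u,v)$: by definition, at least one of $\distd(u,w), \distd(v,w)$ is $\geq M$. Invoking the distortion property a second time (the case $\distd \geq M$ forces $\dist > M - \tau$, since otherwise $\dist < M + \tau$ would allow $|\dist - \distd| < \tau$ and contradict $\distd \geq M$), the corresponding true distance from $w$ exceeds $M - \tau$. Meanwhile, $(u,v) \in \eclusters{\radm}{i}$ gives $\distd(u,v) < \radm$ and hence $\dist(u,v) < \radm + \tau$; since $x,y$ lie on the $u$--$v$ path with $\dist(u,v) = \dist(u,x) + \weight_e^{(i)} + \dist(y,v)$, both $\dist(u,x)$ and $\dist(y,v)$ are strictly less than $\radm + \tau$.

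The key step is to lower-bound $\dist(w,y)$. Because $w \in \bipart^{(v)}$, the path from $v$ to $w$ in $\truetree{i}$ does not cross $e$ (so it decomposes as $v \to y \to w$), while the path from $u$ to $w$ does cross $e$ (decomposing as $u \to x \to y \to w$). In the case $\distd(v,w) \geq M$, additivity along $v \to y \to w$ gives
\begin{equation*}
\dist(y,w) = \dist(v,w) - \dist(v,y) > (M-\tau) - (\radm + \tau) = M - \radm - 2\tau.
\end{equation*}
In the case $\distd(u,w) \geq M$, the same argument applied to $u \to y \to w$ (using $\dist(u,y) = \dist(u,x) + \weight_e^{(i)} \leq \dist(u,v) < \radm + \tau$) produces the identical bound.

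Finally, since $w$ and $w'$ lie on opposite sides of $e$, their path in $\truetree{i}$ must traverse $e$, so $\dist(w,w') \geq \dist(w,y) > M - \radm - 2\tau$. The hypothesis $2\radm + 3\tau < M$ rearranges to $M - \radm - 2\tau > \radm + \tau$, delivering the required contradiction with $\dist(w,w') < \radm + \tau$. The only real subtlety is bookkeeping which side of $e$ each of $u,v,w,w'$ lies on so that the additive-metric decompositions through $y$ (and through $x$) are valid; this is not a genuine obstacle once the bipartition topology is fixed by Proposition~\ref{proposition:mini}.
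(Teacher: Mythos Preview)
Your argument is correct and is essentially the paper's proof run in the reverse direction: the paper bounds $\dist(w,x)$ from above via $\distd(w,w')<\radm$ and then shows $\distd(w,u),\distd(w,v)<2\radm+3\tau<M$ (contradicting $w\notin\ball{\bprime}{i}(u,v)$), whereas you start from $w\notin\ball{\bprime}{i}(u,v)$, lower-bound $\dist(w,y)$, and push that to $\dist(w,w')>\radm+\tau$. The paper's orientation is slightly more economical because a single bound $\dist(w,x)<\radm+\tau$ handles both $u$ and $v$ at once, avoiding your case split.

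One small imprecision: in your case $\distd(v,w)\geq M$, the claimed equality $\dist(y,w)=\dist(v,w)-\dist(v,y)$ need not hold, since $v,w\in\bipart^{(v)}$ does \emph{not} force the $v$--$w$ path to pass through $y$ (both could lie in the same subtree hanging off $y$). However, all you actually need is the triangle inequality $\dist(y,w)\geq \dist(v,w)-\dist(v,y)$, so the bound $\dist(y,w)>M-\radm-2\tau$ and the rest of the argument go through unchanged.
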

\begin{proof}
Assume by contradiction that there is $w' \in\bipart^{(u)}$
such that 
$\distd(w,w') < \radm$. 
The path between $w$ and $w'$ must go through $e$ since $w$ and $w'$
are on different sides of the partition. Therefore, for one of the endpoints of $e$,
say $x$, we have $\dist(w,x) < \radm + \tau$. Also, since
$\dist(u,x) \leq \dist(u,v) < \distd(u,v) + \tau < \radm + \tau$, we have
\begin{equation*}
\dist(w,u) < \dist(w,x) + \dist(x,u) < 2\radm + 2\tau < M.
\end{equation*}
We finally get
\begin{equation*}
\distd(w,u) < \dist(w,u) + \tau < 2\radm + 3\tau < M,
\end{equation*}
and similarly for $\distd(w,v)$, 
a contradiction since we assumed $w\notin \ball{\bprime}{i}(u,v)$.
\end{proof}
\begin{proposition}[Correct Extension]\label{proposition:extension}
The bipartition $\fullpart{j}{u}{v}$ returned by \extender\ is correct, that is,
$\fullpart{j}{u}{v} = \bipart_{\truetree{i}}(e)$.
\end{proposition}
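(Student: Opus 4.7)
The plan is to show that for every leaf $w \in \vclusters{\radm}{i} \setminus (\minipartone{j}{u}{v} \cup \miniparttwo{j}{u}{v})$, the side of $\minipart{j}{u}{v}$ to which $w$ is connected in $K$ is precisely the side of $\bipart_{\truetree{i}}(e) = \{\bipart^{(u)}, \bipart^{(v)}\}$ that contains $w$. By symmetry it suffices to treat $w \in \bipart^{(v)}$, and I must verify two things: in $K$, the leaf $w$ is connected to $\miniparttwo{j}{u}{v}$ and not to $\minipartone{j}{u}{v}$. Granting these for each such $w$, and using Proposition~\ref{proposition:mini} which gives $\minipartone{j}{u}{v} = \bipart^{(u)} \cap \ball{\bprime}{i}(u,v)$ and $\miniparttwo{j}{u}{v} = \bipart^{(v)} \cap \ball{\bprime}{i}(u,v)$, the extender outputs $\fullpart{j}{u}{v} = \bipart_{\truetree{i}}(e)$ as desired.

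For the non-connectivity half, suppose towards a contradiction that there is a path in $K$ from $w$ to some leaf in $\minipartone{j}{u}{v}$. Since $w \in \bipart^{(v)}$ and the endpoint lies in $\bipart^{(u)}$, the path must traverse some edge $(a,b) \in \ecluster{\radm}$ with $a \in \bipart^{(v)}$ and $b \in \bipart^{(u)}$, so in particular $\distd(a,b) < \radm$. I then eliminate each possible location of $(a,b)$. If $a \notin \ball{\bprime}{i}(u,v)$, then Proposition~\ref{proposition:outside} applied to $a$ forces $\distd(a,b) \geq \radm$, a contradiction; the case $b \notin \ball{\bprime}{i}(u,v)$ is handled symmetrically, using that the proof of Proposition~\ref{proposition:outside} is invariant under swapping the roles of $\bipart^{(u)}$ and $\bipart^{(v)}$ since the only hypothesis on that pair is $\distd(u,v) < \radm$. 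Finally, if both endpoints lie in $\ball{\bprime}{i}(u,v)$, then $a \in \miniparttwo{j}{u}{v}$ and $b \in \minipartone{j}{u}{v}$, so $(a,b)$ is exactly the kind of edge removed when constructing $K$, another contradiction.

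For the positive connectivity half, I will exploit that $\clusters{\radm}{i}$ is connected by construction. Fix any path $w = z_0, z_1, \ldots, z_k = v$ in $\clusters{\radm}{i}$, and let $l$ be the largest index for which $z_l$ is in the same $K$-component as $w$. If $l = k$ we are done, since $v \in \miniparttwo{j}{u}{v}$. Otherwise the edge $(z_l, z_{l+1})$ belongs to $\clusters{\radm}{i}$ but not to $K$, so it joins $\minipartone{j}{u}{v}$ to $\miniparttwo{j}{u}{v}$; in particular $z_l$ lies in one of these two sets and is $K$-reachable from $w$. The previous paragraph rules out $z_l \in \minipartone{j}{u}{v}$, hence $z_l \in \miniparttwo{j}{u}{v}$ as required. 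The main obstacle is the case analysis for the non-connectivity claim: one must invoke Proposition~\ref{proposition:outside} in both orientations and verify that the edges deleted in forming $K$ are exactly the ones needed to close off every possible crossing between $\bipart^{(u)}$ and $\bipart^{(v)}$; the remaining arguments are essentially elementary graph-theoretic bookkeeping.
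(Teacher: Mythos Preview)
Your proof is correct and follows essentially the same approach as the paper: both rely on the connectivity of $\clusters{\radm}{i}$, the fact that $K$ is obtained by removing only the edges between $\minipartone{j}{u}{v}$ and $\miniparttwo{j}{u}{v}$, and Proposition~\ref{proposition:outside} to rule out crossings to the wrong side. Your argument is in fact more explicit than the paper's, which compresses the entire case analysis into a single sentence; in particular, you spell out the three cases for the location of a hypothetical crossing edge $(a,b)$ and separately verify positive connectivity to the correct side, whereas the paper leaves these implicit.
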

\begin{proof}
Let $K$, $\minipartone{j}{u}{v}$, $\miniparttwo{j}{u}{v}$ be as in Figure~\ref{figure:extender}.
Since $\clusters{\radm}{i}$ is connected and we only remove edges between 
$\minipartone{j}{u}{v}$ and $\miniparttwo{j}{u}{v}$ to form $K$, 
it follows from Proposition~\ref{proposition:outside} that all vertices
in $\vclusters{\radm}{i} - (\minipartone{j}{u}{v}\cup\miniparttwo{j}{u}{v})$
are connected in $K$ to either $\minipartone{j}{u}{v}$ or $\miniparttwo{j}{u}{v}$.
\end{proof}
We finally get the following.
\begin{proposition}[Correctness of Main Loop]\label{proposition:mainloop}
Let $\{\esttree{i}\}_{i=1}^q$ be the trees obtained at the end
of the Main Loop of our algorithm. Then, for all $i=1,\ldots,q$, $\esttree{i}$ is a refinement of 
$F_{4 \tau, +\infty}(\truetree{i})$.
\end{proposition}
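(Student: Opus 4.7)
The plan is to combine the two main guarantees already established, namely Proposition~\ref{proposition:mini} (correctness of \minicontractor\ on $\ball{\bprime}{i}(u,v)$) and Proposition~\ref{proposition:extension} (correctness of \extender), with a purely graph-theoretic argument exploiting the connectedness of $\clusters{\radm}{i}$. My target is to show two things simultaneously: (a) every bipartition that the algorithm produces for the $i$-th component corresponds to an edge of $\truetree{i}$; and (b) every edge $e\in\trueedges{i}$ with $\weight^{(i)}_e>4\tau$ appears as one of those bipartitions. Once both are in hand, the tree $\esttree{i}$ deduced from the computed collection is obtained from $\truetree{i}$ simply by contracting a subset of edges of weight at most $4\tau$, which is precisely what it means to be a refinement of $F_{4\tau,+\infty}(\truetree{i})$.

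Part (a) is essentially free: for every pair $(u,v)\in\eclusters{\radm}{i}$ considered in the Main Loop, Proposition~\ref{proposition:mini} Part~\ref{item:11} says each $\minipart{j}{u}{v}$ is the restriction to $\ball{\bprime}{i}(u,v)$ of a unique bipartition $\bipart_{\truetree{i}}(e)$ of $\truetree{i}$, and Proposition~\ref{proposition:extension} then says that $\fullpart{j}{u}{v}$ recovers that bipartition in full on $\vclusters{\radm}{i}$. So every bipartition produced is a genuine edge-bipartition of $\truetree{i}$; taken together they form a compatible family, and the tree $\esttree{i}$ deduced from them is obtained from $\truetree{i}$ by contracting exactly those edges whose bipartitions were never produced.

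The main step is (b), i.e.\ showing that no edge of weight $>4\tau$ can be missed. For this I first claim that for every edge $e\in\trueedges{i}$ there is at least one witnessing pair $(u,v)\in\eclusters{\radm}{i}$ with $e\in\path_{\truetree{i}}(u,v)$. The edge $e$ induces a bipartition $\{\bipart^{(1)},\bipart^{(2)}\}$ of the leaf set $\vclusters{\radm}{i}$ in which both sides are nonempty (since $\truetree{i}$ has no degree-$2$ vertex and $e$ is an internal edge of the restriction). Because $\clusters{\radm}{i}$ is by construction connected, some edge of $\ecluster{\radm}$ must cross this bipartition, producing the required pair $(u,v)$ with $\distd(u,v)<\radm$ and $e\in\path_{\truetree{i}}(u,v)$. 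Applying Proposition~\ref{proposition:mini} Part~\ref{item:22} (whose hypothesis $2\radm+3\tau<M$ holds by the standing assumption $\radm<\tfrac{1}{2}(M-3\tau)$) and then Proposition~\ref{proposition:extension} to this pair shows that, provided $\weight^{(i)}_e>4\tau$, the algorithm outputs exactly the bipartition $\bipart_{\truetree{i}}(e)$ during the iteration on $(u,v)$.

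Combining (a) and (b): the family of bipartitions produced by the Main Loop on component $i$ is a subset of the edge-bipartitions of $\truetree{i}$ that contains every edge-bipartition of weight $>4\tau$. Hence $\esttree{i}$, being the tree induced by this family, is obtained from $\truetree{i}$ by contracting only edges of weight $\leq 4\tau$, which is exactly the statement that $\esttree{i}$ refines $F_{4\tau,+\infty}(\truetree{i})$. The only delicate point in the whole argument is the witness claim in part (b); everything else is a direct assembly of the previously established propositions.
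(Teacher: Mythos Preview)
Your proof is correct and follows the same approach as the paper, which simply cites Propositions~\ref{proposition:mini} and~\ref{proposition:extension} to conclude that all reconstructed bipartitions are valid and include every edge of weight exceeding $4\tau$. You make explicit the witness step (that for each $e\in\trueedges{i}$ some $(u,v)\in\eclusters{\radm}{i}$ has $e\in\path_{\truetree{i}}(u,v)$), which the paper leaves implicit via the connectedness argument already used in Proposition~\ref{proposition:chord}.
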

\begin{proof}
By Propositions~\ref{proposition:mini} and~\ref{proposition:extension},
all reconstructed edges are correct and they include at least those edges longer than $4 \tau$.
%By Proposition~\ref{proposition:chord}, the chord depth of $\truetree{i}$ is at most
%$\beta M + \tau$.
\end{proof}

\subsection{Path-disjointness: Length and depth of shared edges}\label{section:pruning}

Let $\truetree{i_1}$, $\truetree{i_2}$ be the tree $T$ restricted to
components $\clusters{\radm}{i_1}$, $\clusters{\radm}{i_2}$ respectively.
Note that each edge in $\truetree{i_j}$ is actually a path in $T$.
\begin{proposition}[Path-Disjointness]\label{proposition:shared}
For all $u_1, v_1 \in \trueleaves{i_1}$ and $u_2, v_2\in\trueleaves{i_2}$ such that
\begin{equation*}
\vpath_T(u_1,v_1) \cap \vpath_T(u_2,v_2) \neq \emptyset,
\end{equation*}
it holds that
\begin{enumerate}

\item \itemname{Depth\ of\ Shared\ Vertices} \label{item:222}
We have 
\begin{equation*}
\min\{\vertexdepth(z)\ :\ z\in\vpath_T(u_1,v_1) \cap \vpath_T(u_2,v_2)\} \geq \frac{1}{2}(\radm - 3\tau).
\end{equation*}

\item \itemname{Length\ of\ Shared\ Edges} \label{item:111} 
If, further, $\path_T(u_1,v_1) \cap \path_T(u_2,v_2) \neq \emptyset$ then
\begin{equation*}
\max\{\weight_e\ :\ e\in\path_T(u_1,v_1) \cap \path_T(u_2,v_2)\} \leq 2\tau.
\end{equation*}
\end{enumerate}
\end{proposition}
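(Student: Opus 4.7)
The approach rests on two preparatory facts. First, since $\clusters{\radm}{i_1}$ and $\clusters{\radm}{i_2}$ are distinct connected components of $\cluster{\radm}$, any pair $(w_1,w_2)$ with $w_1 \in \vclusters{\radm}{i_1}$ and $w_2 \in \vclusters{\radm}{i_2}$ satisfies $\distd(w_1,w_2) \geq \radm$; a short case analysis of Definition~\ref{def:distorted metric} (treating $\distd(w_1,w_2) < M+\tau$ and $\distd(w_1,w_2) \geq M+\tau$ separately, and using $\radm < M$) then upgrades this to the true-metric bound $\dist(w_1,w_2) \geq \radm - \tau$. Second, Proposition~\ref{proposition:chord} already asserts that both $\truetree{i_1}$ and $\truetree{i_2}$ have chord depth at most $\radm + \tau$, so every edge of either restricted tree is witnessed by a ``close'' pair of separating leaves.

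For Part~\ref{item:222}, my plan is to sandwich $z$ between nearby leaves of each component. First, $z$ cannot be a leaf of $T$: otherwise $z \in \{u_1,v_1\} \cap \{u_2,v_2\}$, but these sets are disjoint because the two components are disjoint. Hence there is a $T$-edge $e_a$ incident to $z$ lying on $\path_T(u_1,v_1)$. Applying the chord-depth bound to the $\truetree{i_1}$-edge containing $e_a$ produces $p,p' \in \vclusters{\radm}{i_1}$ on opposite sides of $z$ in $T$ with $\dist(p,z) + \dist(z,p') = \dist(p,p') \leq \radm + \tau$, so some $\tilde p \in \vclusters{\radm}{i_1}$ sits within $\tfrac{1}{2}(\radm+\tau)$ of $z$; the symmetric argument applied to $\truetree{i_2}$ yields $\tilde q \in \vclusters{\radm}{i_2}$ within the same distance of $z$. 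For an arbitrary leaf $\ell$, either $\ell \notin \vclusters{\radm}{i_1}$ or $\ell \notin \vclusters{\radm}{i_2}$; in the first case the preparatory cross-component bound and the triangle inequality give
\begin{equation*}
\dist(\ell,z) \geq \dist(\ell,\tilde p) - \dist(\tilde p, z) \geq (\radm - \tau) - \tfrac{1}{2}(\radm + \tau) = \tfrac{1}{2}(\radm - 3\tau),
\end{equation*}
and the symmetric estimate (using $\tilde q$) handles the other case.

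For Part~\ref{item:111}, let $e = (x,y)$ be a shared edge and split $T$ at $e$ into sides $X_x \ni x$ and $X_y \ni y$. Applying chord depth of $\truetree{i_1}$ to the $\truetree{i_1}$-edge containing $e$ delivers $p_x \in X_x \cap \vclusters{\radm}{i_1}$ and $p_y \in X_y \cap \vclusters{\radm}{i_1}$ with
\begin{equation*}
\dist(p_x,x) + \weight_e + \dist(y,p_y) = \dist(p_x,p_y) \leq \radm + \tau,
\end{equation*}
and the analogous construction in $\truetree{i_2}$ produces $q_x \in X_x \cap \vclusters{\radm}{i_2}$ and $q_y \in X_y \cap \vclusters{\radm}{i_2}$ satisfying the same type of inequality. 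The cross-component pairs $(p_x,q_x)$ and $(p_y,q_y)$ lie entirely inside $X_x$ and $X_y$ respectively, so the triangle inequality through $x$ (resp.\ $y$) combined with the preparatory cross-component bound yields
\begin{equation*}
\dist(p_x,x)+\dist(x,q_x) \geq \radm - \tau \quad \text{and} \quad \dist(p_y,y)+\dist(y,q_y) \geq \radm - \tau.
\end{equation*}
Summing the two upper bounds and the two lower bounds and comparing the identical left-hand sides then gives $2\weight_e \leq 4\tau$, i.e., $\weight_e \leq 2\tau$, uniformly over shared edges.

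The only delicate accounting I anticipate is verifying that each chord-depth witness really sits on the correct side of $z$ (resp.\ $e$), but this is automatic from the definition of chord depth as a minimum over leaf pairs separated by the edge in question: the witnesses straddle that edge by construction, so the same-side versus opposite-side dichotomy used above is forced.
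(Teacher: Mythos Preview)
Your proof is correct and follows essentially the same strategy as the paper: both parts hinge on the chord-depth bound of Proposition~\ref{proposition:chord} to produce nearby within-component witnesses, combined with the cross-component separation $\dist(w_1,w_2)\ge \radm-\tau$ (which the paper invokes implicitly via the clustering graph). The only cosmetic difference is in Part~\ref{item:111}, where the paper packages the same additive-metric cancellation as a single four-point identity $2\,\dist(x,y)=\dist(x_1,y_1)+\dist(x_2,y_2)-\dist(x_1,x_2)-\dist(y_1,y_2)$, whereas you obtain the identical bound by summing the two chord-depth inequalities and subtracting the two triangle-inequality lower bounds through $x$ and $y$.
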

\begin{proof}
Let $z \in \vpath_T(u_1,v_1) \cap \vpath_T(u_2,v_2)$.
For $j=1,2$,
by Proposition~\ref{proposition:chord}, there are leaves $x_j, y_j$ in
$\trueleaves{i_j}$ such that $z \in \vpath_{\truetree{i_j}}(x_j,y_j)$ and 
$\dist(x_j,y_j) < \radm + \tau$.

For Part~\ref{item:222},
assume without loss of generality 
that $\dist(x_2,z) < \frac{1}{2}(\radm + \tau)$. Then,
for all $w\in \trueleaves{i_1}$,
\begin{eqnarray*}
\dist(w,z) 
&\geq& \dist(w, x_2) - \dist(z, x_2)\\
&\geq& \radm - \tau - \frac{1}{2}(\radm + \tau)\\
&\geq& \frac{1}{2}(\radm - 3\tau).
\end{eqnarray*}
A similar argument applies to $w \in \trueleaves{i_2}$ and $w \in L - (\trueleaves{i_1}\cup\trueleaves{i_2})$.

For Part~\ref{item:111}, let $e=(x,y) \in \path_T(u_1,v_1) \cap \path_T(u_2,v_2)$. 
Assume without loss of generality
that the path from $x$ to $y$ partitions $\{x_1,y_1,x_2,y_2\}$
as $\{\{x_1,x_2\},\{y_1,y_2\}\}$ in $T$, where $x_1,x_2,y_1,y_2$
were defined above.
We have
\begin{eqnarray*}
2 \dist(x,y) 
&=& \dist(x_1,y_1) + \dist(x_2,y_2) - \dist(x_1,x_2) - \dist(y_1,y_2)\\
&<& \distd(x_1,y_1) + \distd(x_2,y_2) - \distd(x_1,x_2) - \distd(y_1,y_2) + 4\tau\\
&<& 2\radm - 2\radm + 4\tau\\
&<& 4\tau,
\end{eqnarray*}
where the third line follows from the definition of the clustering graph
$\cluster{\radm}$.
\end{proof}

\subsection{Proof of Main Theorem}\label{section:proofmain}

\paragraph{Proof of Theorem~\ref{theorem:main}:}
Part~\ref{item:main1} follows from 
Proposition~\ref{proposition:shared}.
Recall that
\begin{equation*}
\radm < \frac{1}{2}[M - 3\tau].
\end{equation*}
Part~\ref{item:main2} then follows from Proposition~\ref{proposition:mainloop} and
Proposition~\ref{proposition:clustering}.
\qed

\section{Tightness of the Result}\label{section:lower}

We showed that given a $(\tau,M)$-distortion 
we reconstruct a subforest of $T$ with chord depth $\approx \frac{1}{2} M$ which includes
all edges of length at least $4\tau$. It may seem that we are losing a factor $2$ in the chord depth
and that, in fact, we should
be able to reconstruct edges of chord depth close to $M$. But this is not the case. We show in this section that
the chord depth of $\approx \frac{1}{2}M$ is essentially best possible (up to $O(\tau)$).

Consider the tree $T_0$ depicted in Figure~\ref{fig:lower1}. The tree $T_0$ has four leaves
$u, v, x_1, x_2$ with adjacent edges of length respectively
$4\tau$, $\frac{1}{2}M + 2\tau$, $\frac{1}{2}M + 4\tau$, and $\frac{1}{2}M + 4\tau$.
The middle edge has length $4\tau$ and the corresponding bipartition is $\{\{u,x_1\},\{v,x_2\}\}$.
\begin{figure}
\begin{center}
\includegraphics[width=7cm]{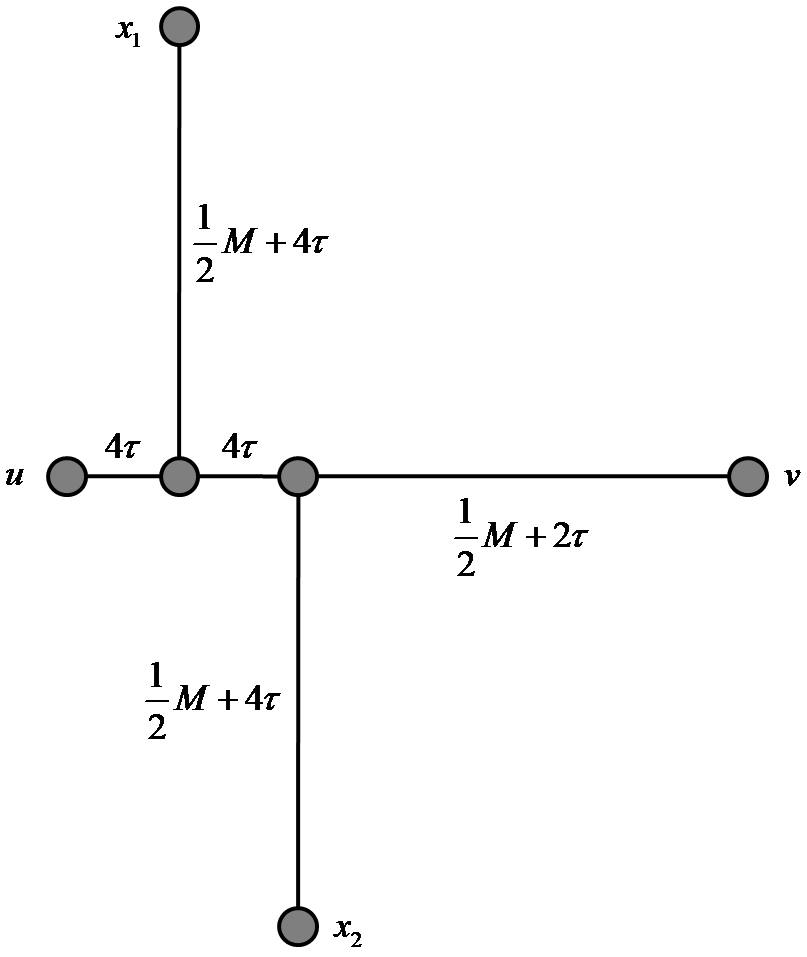}
\caption{Counter-example: Reference tree $T_0$.}\label{fig:lower1}
\end{center}
\end{figure}
Assume that we have the following $(\tau,M)$-distortion of the metric corresponding to $T_0$:
\begin{eqnarray*}
&&\distd_0(u,v) = \frac{1}{2}M + 10\tau,\ \distd_0(u,x_1) = \frac{1}{2}M + 8\tau,\ 
\distd_0(u,x_2) = \frac{1}{2}M + 12\tau,\\
&&\distd_0(v,x_1) = \distd_0(v,x_2) = \distd_0(x_1,x_2) = +\infty.
\end{eqnarray*}

Now, note that $\distd_0$ is also a $(\tau,M)$-distortion for the tree $T_1$
depicted in Figure~\ref{fig:lower2}. The tree $T_1$ has four leaves
$u, v, x_1, x_2$ with adjacent edges of length respectively
$4\tau$, $\frac{1}{2}M + 2\tau$, $\frac{1}{2}M$, and $\frac{1}{2}M + 8\tau$.
The middle edge has length $4\tau$ and the corresponding bipartition is $\{\{u,x_2\},\{v,x_1\}\}$.
\begin{figure}
\begin{center}
\includegraphics[width=7cm]{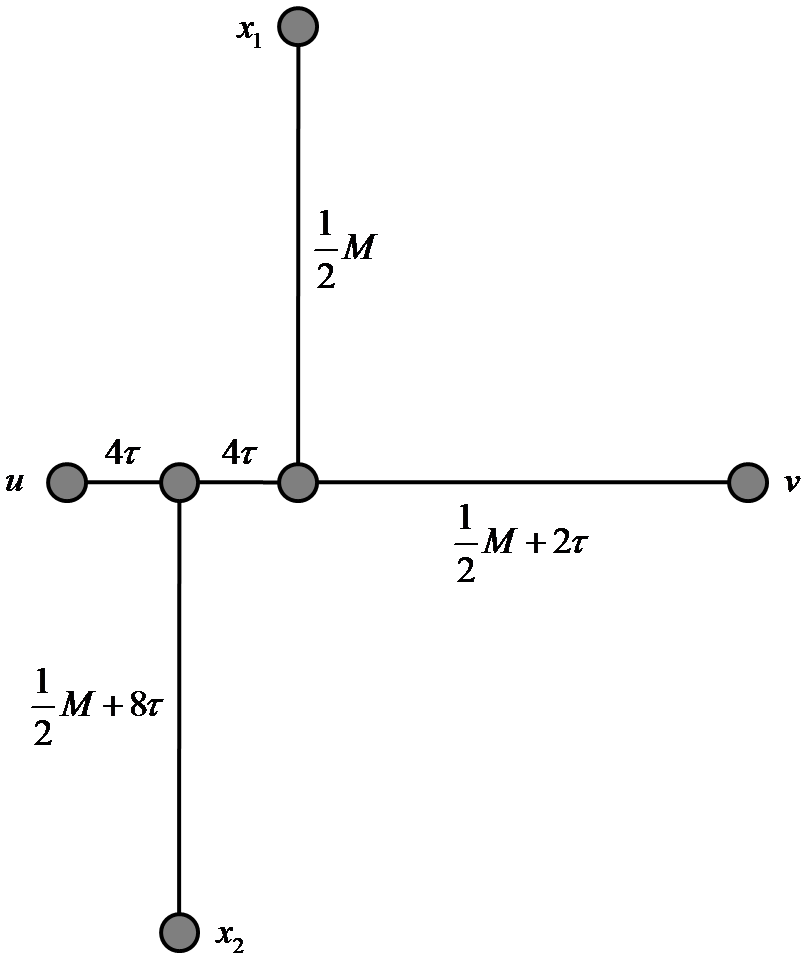}
\caption{Counter-example: Tree $T_1$ with equivalent distortion.}\label{fig:lower2}
\end{center}
\end{figure}

Hence, the two incompatible trees $T_0$ and $T_1$ cannot in general be distinguished from a $(\tau,M)$-distortion.
In particular, note that the middle edge of $T_0$ has length $4\tau$ and chord depth $\frac{1}{2}M + 10\tau$,
yet its bipartition cannot be recovered. This proves the claim.

\section{Implementation}\label{section:running}

We briefly discuss the running time of the algorithm.

Building the graph $\cluster{m}$ takes time $O(n^2)$, since we have to consider all pairs of leaves, and we find the connected components of $\cluster{m}$ with Breadth-First-Search in another $O(n^2)$. We argue next that, for $i=1,\ldots,q$, we need $O(n_i^5)$ to build $\esttree{i}$, where $n_{i}=|\vclusters{m}{i}|$. We show first that for all pairs of leaves $u$ and $v$, \minicontractor~and~\extender~take time $O(n_i^3)$. Indeed, \minicontractor~takes time $O(n_i)$, since its running time is linear in the size of $\ball{u}{v}(u,v)$; and \extender~takes time $O(n_i^3)$, since for each bipartition $\minipart{j}{u}{v}$---there are at most $O(n_i)$ of those---it is enough to perform a BFS. Given all bipartitions of the tree $\esttree{i}$, we use the standard TREE POPPING algorithm of~\cite{Meacham:81,BandeltDress:86} to build $\esttree{i}$; since we have $O(n_i^3)$ bipartitions (not all of them distinct) this last step takes time $O(n_i^4)$. So for each tree $i$ we need $O(n_i^5)$, and summing over $i$'s the total running time becomes $O(n^5)$.

We can improve on this running time by a more efficient implementation of \extender~as follows. For all $j=0,\ldots,r(u,v)$, we remove from the graph $\clusters{m}{i}$ all leaves in $\cup_{\ell \neq j} C_{\ell}$ and perform a Breadth-First-Search to discover the leaves $K_j \subset \vclusters{m}{i}\setminus \ball{u}{v}(u,v)$ reachable in $\clusters{m}{i}$ from the leaves in $C_j$. From an easy modification of Propositions~\ref{proposition:mini} ans~\ref{proposition:outside}, it follows that for every $w \notin \ball{u}{v}(u,v)$ there is at most one $j \in \{0,\ldots,r(u,v)\}$ such that $w$ is connected to a leaf in $C_j$. Given this, we can argue that we can recover the bipartitions $\fullpart{j}{u}{v}$, $j=1,\ldots,r(u,v)$ from the $K_j$'s. The overall time needed by the BFS's is $O(n_i^2)$, hence $\esttree{i}$ can be computed in time $O(n_i^4)$ and our total running time becomes $O(n^4)$.

The above implementation is wasteful in running a BFS for every pair of leaves $u$ and $v$ 
with the possibility of creating as many as $O(n^3)$ bipartitions, each requiring $O(n)$ 
storage. Note that there are in fact at most $n$ distinct bipartitions in $T$. 
To improve on the running time one may need to combine 
the BFS's performed in the above implementation by interleaving 
the \minicontractor~and~\extender~steps with the TREE POPPING algorithm.

\section{Concluding remarks}

An interesting question for future work is whether the {\em approximate}
disjointness in our results can be avoided. Since we guarantee
that any shared edge lies deep inside the forest, it is tempting
to simply remove all deep edges (say beyond $\radm/4$) from the output forest.
Unfortunately, many of these edges may in fact be contracted and moreover
they may be clustered in ``supernodes'' including both deep and not-so-deep
edges. It does not seem to be a trivial task to break these deep supernodes
apart and preserve strong reconstruction guarantees.  

%\clearpage

%\bibliographystyle{splncs}
\bibliographystyle{alpha}
\bibliography{thesis}{}

\clearpage

\appendix

%\clearpage
%
%
\section{Log-Det Estimator}\label{section:markov}

For completeness, we relate the definition of the distorted metric (see Definition~\ref{def:distorted metric}) 
to its biological context. In phylogenetic reconstruction, a distorted metric is naturally derived 
from samples of a Markov model on a tree---a common model of DNA sequence evolution used in Biology.

\begin{definition}[Markov model on a tree]\label{def:mmt}
A {\em Markov model on a tree} is the following stochastic
process:
\begin{itemize}
\item Let $T_\rho=(V,E,\rho)$ be a finite tree rooted at $\rho$.
Denote by $E_\downarrow$ the set $E$ directed away from the root.
\item Let $L = [n]$ be the leaf set of $T_\rho$.
\item Let $\rcal$ be a finite set with $r$ elements.
\item Associate to each edge $e\in E$ a $r\times r$ stochastic matrix
$M(e)$ with $\det M(e) > 0$.
\item Let $\pi_\rho$ be a distribution on $\rcal$ with $\pi_\rho(\sigma) > 0$
for all $\sigma\in \rcal$.
\end{itemize}
The process runs as follows. Pick a state for the root according
to $\pi_\rho$. Moving away from the root toward the leaves,
apply the channel $M(e)$ to each edge $e$ independently. 
Denote the state so obtained $\sigma_V = (\sigma_v)_{v\in V}$. In particular,
$\sigma_{[n]}$ is the state at the leaves. 
More precisely, the joint distribution of $\sigma_V$ is given by
\begin{equation*}
\mu_V(\sigma_V) = \pi_{\rho}(\sigma_\rho) 
\prod_{e = (x,y) \in E_\downarrow} (M(e))_{\sigma_{x} \sigma_{y}},
\end{equation*}
and therefore the distribution at the leaves is
\begin{equation*}
\mu_L(\sigma_L) = \sum_{\sigma'_V : \sigma'_L = \sigma_L}
\pi_{\rho}(\sigma'_\rho) 
\prod_{e = (x,y) \in E_\downarrow} (M(e))_{\sigma'_{x} \sigma'_{y}}.
\end{equation*}
For $W \subseteq V$, we denote by $\mu_W$ the marginal
of $\mu_V$ at $W$. 
\end{definition}
More generally, we are given
$k$ independent samples $(\sigma^{i}_{[n]})_{i=1}^k$
from the same Markov model. 
We think of $(\sigma_l^i)_{i=1}^k$
as the sequence at $l \in [n]$. Typically in biological applications 
$\rcal = \{\mathrm{A}, \mathrm{G},\mathrm{C},\mathrm{T}\}$.
MMTs model how DNA sequences stochastically evolve by point mutations
along an evolutionary tree---under the assumption that each site in the sequences evolves 
independently.

In the phylogenetic reconstruction problem, we are given sequences 
$(\sigma^{i}_{[n]})_{i=1}^k$ (one sequence for each extant species) 
and our goal is to recover the generating tree---or more precisely
its unrooted version (the root is typically not identifiable~\cite{Steel:94}). 
A natural place to start is to measure a notion of
``distance'' between the leaves. 
That is, we seek to associate to an MMT
an additive metric as defined in Definition~\ref{def:logdet}. 
In general, this can be achieved using the so-called log-det distance.
\begin{definition}[Log-Det Distance~\cite{Steel:94}. See also~\cite{BarryHartigan:87,
LoStHePe:94,Lake:94}.]\label{def:logdet}
Consider the Markov model in Definition~\ref{def:mmt}.
Associate to each edge $e = (u,v)\in E_\downarrow$ a weight $\weight(e)$ as follows:
\begin{itemize}
\item If $e$ is a leaf edge then
\begin{equation*}
\weight(e) = -\log  \det M(e)  - \frac{1}{2} \log \prod_{\sigma' \in \rcal} \mu_{u}(\sigma').
\end{equation*}
\item Otherwise
\begin{equation*}
\weight(e) = -\log  \det M(e)  
- \frac{1}{2} \log \prod_{\sigma' \in \rcal} \mu_{u}(\sigma')
+ \frac{1}{2} \log \prod_{\sigma' \in \rcal} \mu_{v}(\sigma').
\end{equation*}
\end{itemize}
The {\em log-det distance} is defined as:
$\forall u,v \in L,$
\begin{equation*}
\dist(u,v)
\equiv -\log \det F(u,v)
= \sum_{e\in\path_T(u,v)} \weight_e,
\end{equation*}
where
\begin{equation*}
\forall \sigma',\sigma'' \in \rcal,\ (F (u,v))_{\sigma',\sigma''} 
= \mu_{\{u,v\}}(\sigma_u = \sigma',\sigma_v = \sigma'').
\end{equation*}
It was shown in~\cite{Steel:94} that the log-det distance is indeed
an additive metric.
\end{definition}
When the sequence length $k$ is finite, we can only obtain an estimate $\distd$ of $\dist$
\begin{equation*}
\distd(u,v)
= -\log \det \widehat{F}(u,v),
\end{equation*}
where
\begin{equation*}
\forall \sigma',\sigma'' \in \rcal,\ (\widehat{F}(u,v))_{\sigma',\sigma''} 
= \frac{1}{k}\sum_{i=1}^k\ind\{\sigma^i_u = \sigma',\sigma^i_v = \sigma''\}.
\end{equation*}
The next lemma,
a slight generalization of Proposition 2.1 in~\cite{Mossel:07},
shows that such an estimator constitutes a {\em distorted metric}.

\begin{lemma}[Log-Det Distance: Distorted Metric]\label{prop:logdet}
Let $\distd$ be the estimator defined above. Then there is a constant $\Lambda > 0$ such that
if one chooses $(\tau, M)$ with
\begin{equation*}
k \geq \frac{\Lambda}{(1 - e^{-\tau})^2 } e^{2M + 4\tau} \log n,
\end{equation*}  
then $\distd$ is a $(\tau, M)$-distortion with probability
$1 - 1/\poly(n)$.
\end{lemma}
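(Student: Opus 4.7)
The plan is to derive the distortion bound from elementary concentration of the empirical frequency matrix $\widehat F(u,v)$ combined with a perturbation bound for the determinant. Symmetry is immediate: swapping the two leaves transposes the matrix, $\widehat F(v,u) = \widehat F(u,v)^\top$, and the determinant is transpose-invariant. What remains is to verify the distortion condition for a fixed pair $u,v$ with enough slack to union bound over all $\binom{n}{2}$ pairs.

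Fix a pair $u,v$. Each of the $r^2$ entries of $\widehat F(u,v)$ is the average of $k$ i.i.d.\ indicator random variables whose mean is the corresponding entry of $F(u,v)$, so Hoeffding's inequality together with a union bound over the $r^2$ entries gives $\prob[\|\widehat F(u,v) - F(u,v)\|_\infty > \epsilon] \leq 2 r^2 e^{-2k\epsilon^2}$. On this good event, the Leibniz expansion of the determinant and the fact that all entries lie in $[0,1]$ yield $|\det \widehat F(u,v) - \det F(u,v)| \leq C_r\, \epsilon$ for a constant $C_r = r\cdot r!$ depending only on the alphabet size. Writing $D = \det F(u,v) = e^{-\dist(u,v)}$ and $\widehat D = \det \widehat F(u,v) = e^{-\distd(u,v)}$, the inequality $|\dist(u,v) - \distd(u,v)| < \tau$ is equivalent to $e^{-\tau} \leq \widehat D/D \leq e^\tau$, which is in turn implied by the single bound $|D - \widehat D| \leq (1-e^{-\tau}) \min(D,\widehat D)$.

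To close the argument, we exploit the distortion hypothesis that either $\dist(u,v) < M+\tau$ or $\distd(u,v) < M+\tau$, which ensures $\max(D,\widehat D) > e^{-(M+\tau)}$. If we choose $\epsilon$ so that $C_r\, \epsilon \leq \tfrac{1}{2}(1-e^{-\tau}) e^{-(M+\tau)}$, then a short case analysis shows that in either case $\min(D,\widehat D) \geq \tfrac{1}{2}e^{-(M+\tau)}$, so the required bound $|D-\widehat D| \leq (1-e^{-\tau})\min(D,\widehat D)$ holds and $|\dist - \distd| < \tau$ follows. Substituting this choice of $\epsilon$ into Hoeffding and union bounding over the $\binom{n}{2}$ pairs of leaves shows that $k = \Omega((1-e^{-\tau})^{-2}\, e^{2(M+\tau)} \log n)$ samples suffice, which is absorbed by the stated bound $k \geq \Lambda (1-e^{-\tau})^{-2} e^{2M+4\tau} \log n$ for large enough $\Lambda$, with failure probability $1/\poly(n)$.

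The main obstacle is handling the asymmetric distortion hypothesis: the event $\{\distd(u,v) < M+\tau\}$ only lower bounds $\widehat D$, not $D$, so one must propagate the entrywise perturbation estimate back to conclude that $D$ itself is bounded away from zero before the Lipschitz estimate for $\log$ can be applied. This is what forces the factor of $\tfrac{1}{2}$ in the choice of $\epsilon$ and ultimately accounts for the extra $e^{2\tau}$ slack in the stated sample complexity compared with the bound obtained when only $\dist < M+\tau$ is assumed.
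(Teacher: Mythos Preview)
Your argument is correct and complete, but it follows a different route from the paper. The paper applies Azuma's inequality directly to the scalar $\det\widehat F(u,v)$, using two facts imported from~\cite{ErStSzWa:99b}: a bounded-differences estimate $|\det\widehat F - \det\widehat F'| \le c_1/k$ when one sample is changed, and a bias bound $|\det F - \expec[\det\widehat F]| \le c_2/k$. It then splits into two cases on the \emph{true} distance, $\omega < M+2\tau$ versus $\omega > M+2\tau$, and bounds the relevant tail probabilities separately. Your approach instead controls all $r^2$ entries of $\widehat F$ by Hoeffding, passes through a deterministic Lipschitz bound for the determinant via the Leibniz expansion, and then carries out a case analysis on the resulting high-probability event. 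Your route is more self-contained (no martingale concentration, no need to track the bias of a nonlinear statistic), at the cost of somewhat cruder constants; the paper's route is more direct on the quantity of interest but relies on the cited auxiliary facts.

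One small expository slip: your last paragraph attributes the extra $e^{2\tau}$ in the stated bound to the factor $\tfrac12$ you inserted in $\epsilon$, but that factor only costs a constant. In fact your computation gives $k = \Omega\bigl((1-e^{-\tau})^{-2} e^{2M+2\tau}\log n\bigr)$, which is genuinely \emph{stronger} than the stated $e^{2M+4\tau}$ bound. The paper's extra $e^{2\tau}$ comes from its case split at $M+2\tau$ rather than $M+\tau$: in the case $\omega < M+2\tau$ it can only lower-bound $\det F$ by $e^{-(M+2\tau)}$, which squares to $e^{-2M-4\tau}$ in the Azuma exponent. Your deterministic treatment of the two hypotheses symmetrically at threshold $M+\tau$ avoids this loss.
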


\begin{proof}
Fix $u,v \in L$. Denote 
$F = F(u,v)$, 
$\widehat F = \widehat F(u,v)$,
$\omega = \dist(u,v)$, and
$\hat\omega = \distd(u,v)$.
We assume that $k$ is at least $\Omega(\log n)$.
Let $\widehat F'$ be $\widehat F$ with
one sample arbitrarily changed.
It was argued in~\cite{ErStSzWa:99b} that
there are constants $c_1, c_2$ such that
\begin{equation*}
\left|\det \widehat F - \det \widehat F'\right| \leq \frac{c_1}{k},
\end{equation*}
and
\begin{equation*}
\left|\det F - \expec[ \det \widehat F ]\right| \leq \frac{c_2}{k}.
\end{equation*}

Assume $\omega < M + 2\tau$
(in particular if $\omega < M + \tau$). 
By Azuma's inequality,
\begin{eqnarray*}
\prob[\hat\omega \geq \omega + \tau]
&=& \prob[\det \widehat F - \det F \leq - (\det F) (1 - e^{-\tau})]\\
&\leq& \prob\left[\det \widehat F - \expec[\det \widehat F]
\leq e^{-M-2\tau}(1 - e^{-\tau}) - \frac{c_2}{k}\right]\\
&\leq& \exp\left(-\frac{k}{2 c_1^2} \left(
e^{-M-2\tau}(1 - e^{-\tau}) - \frac{c_2}{k}
\right)^2\right),
\end{eqnarray*}
where we assume $k$ is large enough that
\begin{equation*}
e^{-M-2\tau}(1 - e^{-\tau}) - \frac{c_2}{k} \geq 0.
\end{equation*}
The same inequality holds for 
$\prob[\hat\omega \leq \omega - \tau]$.

On the other hand,
assume $\omega > M + 2\tau$. Then,
\begin{eqnarray*}
\prob[\hat\omega \leq M + \tau]
&\leq& \prob\left[\det \widehat F - \expec[\det \widehat F]
\geq e^{-M-\tau} - e^{-M - 2 \tau} - \frac{c_2}{k}\right]\\
&\leq& \exp\left(-\frac{k}{2 c_1^2} \left(
e^{-M-\tau}(1 - e^{-\tau}) - \frac{c_2}{k}
\right)^2\right).
\end{eqnarray*}
\end{proof}

\end{document}